\definecolor{darkblue}{rgb}{0.1,0.1,.7}
\numberwithin{equation}{section}
\newcommand{\tr}{\mathrm{Tr}\,}
\newcommand{\rT}{\mathrm{T}}
\newcommand{\Res}{\mathrm{Res}}
\newcommand{\complex}{\mathbb{C}}
\newcommand{\ket}[1]{{\left|#1\right\rangle}}
\newcommand{\skalarszorzat}[2]{{\langle #1 | #2 \rangle}}
\newcommand{\lan}{{\boldsymbol \lambda}_N}
\newcommand{\xn}{{\boldsymbol x}_N}
\newcommand{\lanp}{{\boldsymbol \lambda}^+_{N/2}}
\newcommand{\mpp}{{\boldsymbol m}_{N/2}}
\newcommand{\lN}{{\boldsymbol l}_N}
\newcommand{\aN}{{\boldsymbol a}_N}
\newcommand{\SSS}{\mathcal{S}}
\newtheorem{thm}{Theorem}
\newtheorem{prop}{Proposition}
\newcommand{\valos}{\mathbb{R}}
\begin{document}
\vspace*{-.6in} \thispagestyle{empty}
\begin{flushright}
CERN-TH-2020-020
\end{flushright}
\vspace{.2in} {\Large
\begin{center}
  {\bf  On exact overlaps in integrable spin chains  }
\end{center}}
\vspace{.2in}
\begin{center}
Yunfeng Jiang$^a$,\quad Bal\'azs Pozsgay$^b$
\\
\vspace{.3in}
\small{$^a$\textit{Theoretical Physics Department, CERN, Geneva, Switzerland\\
\vspace{.2cm}
 $^b$MTA-BME Quantum Dynamics and Correlations Research Group,\\
 Department of Theoretical Physics,\\
 Budapest University of Technology and Economics,\\
 1521 Budapest, Hungary}
}

\end{center}

\vspace{.3in}

\begin{abstract}
\normalsize{We develop a new method to compute the exact
  overlaps between integrable boundary states and on-shell Bethe states for integrable spin
  chains. Our method is based on the
  coordinate Bethe Ansatz and does not rely on the ``rotation trick'' of the corresponding lattice
  model. It leads to a rigorous proof of the factorized overlap formulae in a
  number of cases, some of which were inaccessible to earlier methods. As
  concrete examples, we consider the compact XXX and XXZ Heisenberg spin chains, and the non-compact
  $SL(2,\valos)$ spin chain.
}
\end{abstract}

\section{Introduction}
\label{sec:intro}

The overlap between an integrable boundary state and an on-shell energy eigenstate is an important
quantity in integrable models. In integrable quantum field theories, when the energy eigenstate is
the ground state, the overlap is known as the exact $g$-function. The $g$-function is a measure of
boundary degrees of freedom and is thus also called the boundary entropy. Very recently, this
quantity made its appearance in the context of AdS/CFT where it is shown
\cite{yunfeng-structure-g,Jiang:2019zig} that the structure constant of two determinant operators and one
non-BPS single trace operator at finite coupling is given by an exact $g$-function on the string world sheet.

Turning to integrable lattice models such as integrable quantum spin chains and classical
statistical lattice models, these overlaps also play an important role. They are crucial ingredients
in the context of quantum quenches, partition functions of integrable lattice models
\cite{Bajnok-unpubl}, as well as the weak coupling limit of integrability in AdS/CFT
\cite{zarembo-neel-cite1,zarembo-neel-cite3,ADSMPS2}.

The first exact result for on-shell overlaps appeared in \cite{Caux-Neel-overlap1} based on the
earlier works \cite{sajat-neel,sajat-karol}. It was found that only the Bethe states whose
rapidities are parity symmetric lead to non-vanishing overlaps. This finding was explained in
\cite{sajat-integrable-quenches}, where an integrability condition was formulated for the boundary
states. It was further argued in \cite{sajat-minden-overlaps} that it is only these integrable
states where simple factorized results can
be expected. This expectation was confirmed in all known cases (see
\cite{sajat-twisted-yangian} and references therein). It is now understood that the integrable
boundary states are closely connected to integrable boundary conditions
\cite{sajat-integrable-quenches,sajat-minden-overlaps,sajat-mps}, generalizing the seminal results
of Ghoshal and Zamolodchikov on integrable boundary QFT \cite{ghoshal-zamolodchikov}.

The exact finite volume overlap formulae have the same structure in all known cases: they are given by a product
of two parts. One part is universal and is given by
the ratio of two so-called Gaudin like determinants (which are replaced by Fredholm determinants in
the continuum limit or in the AdS/CFT situation). The other part depends on the details of the boundary state
and is a product of simple scalar factors, or a sum of such products. We note that the first work
which derived this structure was \cite{sajat-marci-boundary}, although the early results of
\cite{sajat-marci-boundary} only pertained to
integrable QFT and they were not used in the later studies of the spin chain overlaps.

The works mentioned above concern \emph{compact} spin chains, where the quantum space at each site
is finite dimensional. On the other hand, \emph{non-compact} chains with infinite dimensional local Hilbert spaces
are highly relevant in QCD and AdS/CFT. To the best of our knowledge, integrable boundary states of non-compact spin chains have never been studied before. Recently, an exact overlap formula with a specific boundary
state in a non-compact chain was
conjectured \cite{yunfeng-structure-g} in the context of AdS/CFT. The factorized overlap takes the
same form as in the
compact case. In the present paper we show that this boundary state
 is indeed integrable, and provide an actual proof for the conjectured overlap formula.

We stress that up to now there have been no methods to actually \emph{prove} the exact overlap
formulae, except for the
simplest cases in the Heisenberg spin chains which are related to the so-called diagonal $K$-matrices
\cite{Caux-Neel-overlap1}. The proof of \cite{Caux-Neel-overlap1} uses an off-shell overlap formula,
which goes back to the work of Tsushiya \cite{tsushiya} (see
also \cite{sajat-neel,sajat-karol}). It is most likely that such an off-shell formula does not exist
in other cases, which are related to off-diagonal $K$-matrices in the XXZ chain, or
any $K$-matrix in higher rank cases. The follow-up
works assumed that the structure of the factorized overlap is the same in all cases, and determined the
one-particle overlap functions using a generalization of the Quantum Transfer Matrix (QTM) method
\cite{sajat-minden-overlaps,sajat-twisted-yangian}. Alternatively, the one-particle overlap
functions could be extracted from coordinate Bethe Ansatz computations
\cite{zarembo-neel-cite1,zarembo-neel-cite3,ADSMPS2}.
And while QTM approach was rather successful in
the compact spin chain, it is not evident whether it can be generalized to the non-compact cases.

In this work we start from scratch. We develop a new method for the
rigorous proof of the overlap formulae, using only the coordinate Bethe Ansatz solution of the models. We work directly in
finite volume, and investigate certain apparent singularities of the overlaps. Our approach is a
generalization of the work of Korepin \cite{korepin-norms}, where it was rigorously proven that the norm of the
on-shell Bethe states is given by the Gaudin determinant.

The rest of the paper is structured as follows. In Section~\ref{sec:spinchains} we introduce the
local spin chains that we study in this paper and review their solution by coordinate Bethe
Ansatz. In Section~\ref{sec:boundary} we discuss integrable boundary states for these spin
chains. We also prove the boundary state proposed in \cite{yunfeng-structure-g} is indeed
integrable. We give the general strategy for the proof of exact overlap formulae using coordinate
Bethe Ansatz in Section~\ref{sec:overlap}. Concrete examples for both compact and non-compact spin
chains are presented in Section~\ref{sec:cases}. We conclude and discuss some future directions in
Section~\ref{sec:general}.

\section{Integrable local spin chains and Bethe Ansatz}
\label{sec:spinchains}
We review the definitions of various local integrable quantum spin chains and their solutions by Bethe Ansatz. More specifically, we will consider the compact XXX and XXZ spin chains and the non-compact $SL(2,\mathbb{R})$ spin chain.

\subsection{Local integrable spin chains}

We consider integrable spin chains given by local Hamiltonians
\begin{align}
\label{eq:localH}
H=\sum_{j=1}^{L} h_{j,j+1}
\end{align}
with periodic boundary condition. We denote the Hilbert space of each local site $j$ by $\mathcal{H}_j$. The dimension of $\mathcal{H}_j$ can be \emph{finite} or \emph{infinite}. Each term $h_{j,j+1}$ act on the space $\mathcal{H}_j\otimes\mathcal{H}_{j+1}$.

\paragraph{Compact spin chain} The Hamiltonian for the compact XXZ spin chain is given by
\begin{equation}
  \label{XXZ-H}
  H=\sum_{j=1}^{L}
  (\sigma^x_j\sigma^x_{j+1}+\sigma^y_j\sigma^y_{j+1}+\Delta
(\sigma^z_j\sigma^z_{j+1}-1)).
\end{equation}
where $\sigma_j^{\alpha}$ ($\alpha=x,y,z$) are the Pauli matrices. Here $\Delta$ is the anisotropy parameter. The isotropic XXX spin chain corresponds to taking $\Delta=1$. For simplicity we focus on the so-called massive regime $\Delta\ge1$ for XXZ spin chain in this paper.\par

The local Hilbert space at each site is $\mathbb{C}^2$. The two basis vectors are
\begin{align}
|\!\uparrow\rangle=\left(
                     \begin{array}{c}
                       1 \\
                       0 \\
                     \end{array}
                   \right),\qquad
|\!\downarrow\rangle=\left(
                     \begin{array}{c}
                       0 \\
                       1 \\
                     \end{array}
                   \right).
\end{align}
The isotropic XXX spin chain has $SU(2)$ symmetry. The local Hilbert spaces form the the
spin-$\tfrac{1}{2}$ representation of the $\mathfrak{su}(2)$ algebra.

\paragraph{Non-compact spin chain} Now we consider the non-compact $SL(2,\mathbb{R})$ spin chain\footnote{This spin chain is nothing but the Heisenberg XXX$_s$ spin chain with local quantum space in the non-compact $s=-1/2$ representation. We choose to call it the $SL(2,\mathbb{R})$ spin chain in accordance with the QCD and AdS/CFT literature.}
\cite{Braun:1998id,Kirch-Manashov-sl2-1}. We first introduce the $SL(2,\mathbb{R})$ algebra. The generators in the spin-$s$ representation can be written in terms of bosonic oscillators $a,a^{\dagger}$ as
\begin{align}
\label{eq:sl2algebra}
S_-=a,\qquad S_0=a^{\dagger}a+s,\qquad S_+=2s a^{\dagger}+(a^{\dagger})^2a.
\end{align}

We will focus on the spin-$\tfrac{1}{2}$ representation and take $s=1/2$ from now on. The generators satisfy the $SL(2,\mathbb{R})$ algebra
\begin{align}
[S_0,S_{\pm}]=\pm S_{\pm},\qquad [S_+,S_-]=-2S_0.
\end{align}
The local Hilbert space for this spin chain is infinite dimensional. The basis vectors are given by
\begin{align}
|n\rangle\equiv\frac{(S_+)^n}{n!}|0\rangle,\qquad n=1,2,\cdots.
\end{align}
where $|0\rangle$ is the vacuum state defined by
\begin{align}
S_-|0\rangle=0.
\end{align}
The action of the generators on the basis is given by
\begin{align}
\label{eq:actSS}
S_+|m\rangle=(m+1)|m+1\rangle,\quad S_-|m\rangle=m|m-1\rangle,\quad S_0|m\rangle=(m+\tfrac{1}{2})|m\rangle.
\end{align}
Similarly, the dual states are defined by
\begin{align}
\langle n|=\langle 0|\frac{(S_-)^n}{n!},\qquad \langle0|S_+=0
\end{align}
Using the definition of the states and the $SL(2,\mathbb{R})$ algebra, it is straightforward to show that the basis states are orthonormal
\begin{align}
\langle n|m\rangle=\delta_{m,n}.
\end{align}
The Hamiltonian takes the local form as in (\ref{eq:localH}). The local Hamiltonian density $h_{j,j+1}$ acts on $\mathcal{H}_j\otimes\mathcal{H}_{j+1}$ as
\begin{align}
\label{eq:sl2hh}
h_{j,j+1}|m_{j}\rangle\otimes|m_{j+1}\rangle=&\,\big(h(m_j)+h(m_{j+1})\big)|m_j\rangle\otimes|m_{j+1}\rangle\\\nonumber
&\,-\sum_{k=1}^{m_j}\frac{1}{k}|m_j-k\rangle\otimes|m_{j+1}+k\rangle\\\nonumber
&\,-\sum_{k=1}^{m_{j+1}}\frac{1}{k}|m_j+k\rangle\otimes|m_{j+1}-k\rangle.
\end{align}
where $h(m)$ is the harmonic sum
\begin{align}
h(m)=\sum_{k=1}^m\frac{1}{k}.
\end{align}
Like their compact cousins, non-compact spin chains also have many applications in physics. For example, the $SL(2,\mathbb{C})$ spin chain shows up in the study of Regge limit of QCD
\cite{Lipatov-1,Faddeev-Korchemsky-non-compact,QCD-sl2-review}. The $SL(2,\mathbb{R})$ spin chain which we study in this paper first appeared in the study of baryon distribution amplitudes in QCD \cite{Braun:1998id}. Later in
integrability in AdS$_5$/CFT$_4$, this Hamiltonian describes the one-loop dilatation operator of the
SL(2) sector. Recently, it also made its appearance in non-equilibrium statistical mechanics
\cite{Frassek-sl2}.

\subsection{Coordinate Bethe Ansatz}

Both the compact and non-compact spin chains are integrable and can be solved by Bethe Ansatz. We
can use either the coordinate or the algebraic Bethe Ansatz to construct the eigenstates. For our
proof below, it is more convenient to use the coordinate Bethe Ansatz. Regarding the spin-$\tfrac{1}{2}$ chains
the method goes back to the works \cite{Bethe-XXX,XXZ1,XXZ2,XXZ3}, whereas for higher spin cases it
was worked out in \cite{martins-melo-2,ragoucy-higher-spin}. In the case of the
non-compact chain we can use the results of \cite{martins-melo-2} or those of \cite{ragoucy-higher-spin} after
  analytic continuation to $s=-1/2$.

\paragraph{Reference state} The eigenstates are constructed as interacting spin waves over a proper reference state. For compact spin chain, the reference state is chosen to be the ferromagnetic vacuum
\begin{align}
|\Omega\rangle=|\!\uparrow\rangle^{\otimes L}.
\end{align}
For the non-compact spin chain, the reference state is chosen to be the Fock vacuum
\begin{align}
|\Omega\rangle=|0\rangle^{\otimes L}.
\end{align}
The reference states are eigenstates of the Hamiltonians. To obtain other eigenstates, we introduce excitations on top of the vacuum state.
A generic eigenstate is characterized by a set of rapidities $\lan\equiv\{\lambda_1,\lambda_2,\cdots,\lambda_N\}$; The corresponding eigenstate will be denoted by $\ket{\lan}$.

\paragraph{Basis vectors} Let us first introduce the basis vectors as
\begin{equation}
   \ket{x_1,\dots,x_N}\sim S^{(x_1)}_\pm S^{(x_2)}_\pm\cdots S^{(x_N)}_\pm|\Omega\rangle,
\end{equation}
where the $x_j$ denote the positions of the sites and $S_\pm^{(x_j)}$ denotes the local spin operator
at site $x_j$ that creates one excitation. Each $x_j$ runs from $1$ to $L$. From our convention of reference states, for the compact and non-compact chains the creation operators are $S^{(x)}_-$ and $S^{(x)}_+$ respectively. Now comes the crucial difference between compact and
non-compact spin chains. For the compact spin chain, we can act with $S_-^{(x_j)}$ on each site
$x_j$ \emph{only once}, thus each site can only hold one excitation. In the contrary, for non-compact spin chain,
we can act with \emph{any number} of $S_+^{(x_j)}$ on site $x_j$.

In the non-compact case the precise normalization of the basis vectors is given by
\begin{equation}
\ket{x_1,\dots,x_N}=E^{(x_1)}_+ E^{(x_2)}_+\cdots E^{(x_N)}_+|\Omega\rangle
\end{equation}
with
\begin{equation}
E_+^{(x)}\ket{m}_x=\ket{m+1}_x.
\end{equation}
The $E_+$ operators are conjugate to $S_+$, and their usage leads to a convenient representation of
the coordinate Bethe Ansatz wave functions. See \cite{ragoucy-higher-spin} for the detailed
discussion of this point.

The basis states are thus given in the two cases  by
\begin{align}
\label{eq:rangeX}
\text{Compact chain}:& \qquad|x_1,x_2,\cdots,x_N\rangle & 1\le x_1<x_2\cdots<x_N\le L,&\\\nonumber
\text{Non-compact chain}:& \qquad|x_1,x_2,\cdots,x_N\rangle & 1\le x_1\le x_2\cdots\le x_N\le L.&
\end{align}
The eigenstate $|\bm{\lambda}_N\rangle$ is given by a proper linear combination of the basis states
\begin{align}
|\bm{\lambda}_N\rangle=\sum_{\{x_j\}} \chi(\xn,\lan)\ket{x_1,x_2,\cdots,x_N},
    \end{align}
    where the range for the summation over $x_j$ are given in (\ref{eq:rangeX}).

\paragraph{Bethe wave functions} Now we discuss how to construct the wave function $\chi(\xn,\lan)$.
It takes the following form:
\begin{equation}
\label{eq:wavefunc}
\chi(\xn,\lan)
=\sum_{\sigma\in S_N}
\prod_{j>k} f(\lambda_{\sigma_j}-\lambda_{\sigma_k})
\prod_{j=1}^Ne^{i p_{\sigma_j}x_j},
\end{equation}
where $p_{\sigma_j}=p(\lambda_{\sigma_j})$ is the momentum of the excitation with rapidity $\lambda_{\sigma_j}$. $f(\lambda)$ is certain known function which is related to the $S$-matrix of excitations by
\begin{align}
S(\lambda,\mu)=\frac{f(\lambda-\mu)}{f(\mu-\lambda)}.
\end{align}
The summation in (\ref{eq:wavefunc}) is over all permutations of indices $\{1,2,\cdots,N\}$, which
is denoted by $S_N$.

Different models are distinguished by the different $p(\lambda)$ and $f(\lambda)$ functions.  For
the three spin chains under consideration, the two functions are given by
\begin{itemize}
\item Compact XXZ chain ($\Delta>1$)
\begin{align}
\label{XXZfunct}
  e^{ip(\lambda)}=\frac{\sin(\lambda-i\eta/2)}{\sin(\lambda+i\eta/2)},
  \quad f(\lambda)=\frac{\sin(\lambda+i\eta)}{\sin(\lambda)},\quad
S(\lambda)=\frac{\sin(\lambda+i\eta)}{\sin(\lambda-i\eta)},
\end{align}
where $\eta$ is related to the anisotropy by $\Delta=\cosh\eta$.
\item Compact XXX chain
\begin{align}
\label{XXXfunct}
  e^{ip(\lambda)}=\frac{\lambda-i/2}{\lambda+i/2},\quad
  f(\lambda)=\frac{\lambda+i}{\lambda},\quad
S(\lambda)=\frac{\lambda+i}{\lambda-i}.
\end{align}
\item Non-compact chain
\begin{align}
\label{eq:noncompactfunct}
e^{ip(\lambda)}=\frac{\lambda-i/2}{\lambda+i/2},\quad f(\lambda)=\frac{\lambda-i}{\lambda},\quad
S(\lambda)=\frac{\lambda-i}{\lambda+i}.
\end{align}
\end{itemize}
Our sign convention for the rapidity is such that $p'(\lambda)>0$ in all cases.

\paragraph{Bethe equations} Periodicity of the eigenstate implies that the rapidities
$\{\bm{\lambda}\}_N$ have to satisfy Bethe equations
\begin{equation}
  \label{BA0}
  e^{ip(\lambda_j)L}\prod_{k\ne j} S(\lambda_j-\lambda_k)=1.
\end{equation}
The rapidities can be found by solving Bethe equations. After finding the rapidities, the eigenvalue of the Hamiltonian is given by the total energy of the system
\begin{align}
\label{BAee}
H|\{\bm{\lambda}_N\}\rangle=E_N(\{\bm{\lambda}\}_N)|\{\bm{\lambda}_N\}\rangle,\qquad E_N(\{\bm{\lambda}\}_N)=\sum_{j=1}^N e(\lambda_j).
\end{align}
For the XXX spin chains (both compact and non-compact) the function $e(\lambda)$ is given by
\begin{align}
e(\lambda)=-\frac{2}{\lambda^2+\frac{1}{4}}.
\end{align}
For the XXZ spin chain, the function is given by
\begin{align}
e(\lambda)=\frac{4\sinh^2\eta}{\cos(2\lambda)-\cosh\eta}.
\end{align}

\paragraph{Some notations} For future use let us introduce the variables
\begin{equation}
  l_j=e^{ip(\lambda_j)}.
\end{equation}
It follows from the concrete formulae \eqref{XXXfunct}-\eqref{XXZfunct}
that $f(\lambda_j-\lambda_k)$ is a rational function of
$l_j,l_k$. With some abuse of notation we will write it as $f(l_j,l_k)$.
We can thus regard the Bethe wave function as a rational function of the $l$-variables:
\begin{equation}
  \label{Bethel}
\chi(\xn,\lan) =\sum_{\sigma\in S_N}
\prod_{j>k} f(l_{\sigma_j},l_{\sigma_k})
\prod_{j=1}^N \left(l_{\sigma_j}\right)^{x_j}.
\end{equation}
This representation will play an important role in the overlap computations. The Bethe equations are
rewritten as
\begin{equation}
  \label{BEl}
a_j
  =\prod_{k\ne j} \frac{f(l_k,l_j)}{f(l_j,l_k)},
\end{equation}
where we introduced the $a$-variables as
\begin{equation}
  \label{adef}
  a_j=l_j^L=e^{ip_jL}.
\end{equation}

\section{Integrable boundary states}
\label{sec:boundary}

In this section, we discuss integrable boundary states for integrable spin chains. We first review
the proposal of \cite{sajat-integrable-quenches} for characterizing integrable boundary states for general spin
chains. Although the proposal was motivated for compact spin chains, it is straightforward to
generalize it to the non-compact cases. On the other hand, some techniques for the explicit
constructions of the boundary states rely on the rotation trick and do not allow for an immediate
generalization to the non-compact case.

After the general discussion, we focus on explicit examples
for the compact and non-compact spin chains. The discussion for the compact cases mainly
just reviews the known results. The results on integrable boundary states of non-compact spin chains are
new. Finally we give the explicit formula for the exact overlap between a Bethe state and the
integrable state, which will be proven in later sections.

\subsection{General discussion}

\label{sec:gencriteria}

We review the definition of integrable boundary states according to
\cite{sajat-integrable-quenches}, which is inspired from the definition of boundary states in
quantum field theories \cite{ghoshal-zamolodchikov}.

Integrable models possess a family of conserved charges that are in involution with each other:
\begin{equation}
  [Q_\alpha,Q_\beta]=0.
\end{equation}
In local spin chains these charges are also local, which means they can be written in the form
\begin{equation}
  Q_\alpha=\sum_{x=1}^L q_\alpha(x).
\end{equation}
where $q_\alpha(x)$ is a local operator whose range can be chosen to be $\alpha$. In other words it
only acts on sites $x,x+1,\dots,x+\alpha-1$. The Hamiltonian of the spin chain is one of the
conserved charges, and usually we choose $H\sim Q_2$.\par

Let $\Pi$ be the space parity operator which acts on the basis vector $|i_1,i_2,\cdots,i_L\rangle$ as
\begin{align}
\Pi|i_1,i_2,\cdots,i_L\rangle=|i_L,i_{L-1},\cdots,i_1\rangle.
\end{align}
The charges can be chosen in such a way that they have fixed parity under space reflection
\begin{equation}
  \Pi Q_\alpha \Pi=(-1)^\alpha Q_\alpha,\qquad \alpha\ge 2.
\end{equation}
Integrable boundary states $\ket{\Psi}$ are defined as the elements of the Hilbert space satisfying the condition
\begin{equation}
\label{eq:intcondQQ}
Q_{2k+1}\ket{\Psi}=0,\qquad k=1,2,\dots
\end{equation}
A perhaps more natural integrability condition can be given using the transfer matrix (TM), which
generates the set of conserved charges. Such a TM can usually be
constructed systematically in the algebraic Bethe Ansatz. In the following we briefly review this
construction.

In the local integrable spin chains related to the Lie-group $G$ there is a rapidity
dependent TM $t^{\Lambda}(u)$ for all representation $\Lambda$ of $G$, such that for all $\Lambda,\Lambda'$:
\begin{equation}
  [t^{\Lambda}(u),t^{\Lambda'}(u')]=0.
\end{equation}
These transfer matrices are constructed using Lax operators as
\begin{equation}
  t^{\Lambda}(u)=\text{Tr}_a T_a^\Lambda(u),\qquad T_a(u)=\prod_{k=1}^L L^\Lambda_{ak}(u).
\end{equation}
Here $L^\Lambda_{a,k}$ are the so-called Lax operators, $k$ is the index of the local Hilbert
spaces, and $a$ stands for an auxiliary space, carrying the representation $\Lambda$ of the group $G$.

Typically there are two distinguished transfer matrices, corresponding to the cases below:
\begin{itemize}
\item $\Lambda$ is the defining representation of the group $G$. The corresponding TM will be
 called ``fundamental'' and it will be denoted as $\tau(u)$.
 \item $\Lambda$  is the representation of the physical spaces. The corresponding TM will be
 called ``physical'' and it will be  denoted as $t^0(u)$.
\end{itemize}

In our cases $G=SU(2)$. In the compact XXX case the physical spaces carry the
defining representation, therefore the two TM's mentioned above coincide. However, in the higher
spin cases and in the non-compact chain they are different.

Typically the physical TM is used to generated the local conserved charges. Expanding it in a power
series we define (see for example \cite{faddeev-how-aba-works} and \cite{Faddeev-Korchemsky-non-compact,Kirch-Manashov-sl2-1} for the non-compact cases)
\begin{align}
t^0(u)=U\,\exp\left(\sum_{n=1}^{\infty}\beta_n \frac{u^n}{n!}Q_{n+1}\right),
\end{align}
where $\beta_n$ are chosen to make the charges $Q_{n+1}$ Hermitian. $U=t^{0}(0)$ is the the
translation or shift operator.

It follows from this expansion that the integrability condition for the boundary state can be written as
\begin{equation}
  \label{intcond1}
  t^0(u) \ket{\Psi}=\Pi\,t^0(u)\,\Pi \ket{\Psi}.
\end{equation}
Several important remarks are in order.

First, this condition is somewhat stronger than \eqref{eq:intcondQQ}, because it also implies
\begin{equation}
  \label{two-site}
  U^2 \ket{\Psi}=\ket{\Psi},
\end{equation}
which does not follow from \eqref{eq:intcondQQ}. Although it has not yet been proven rigorously that
\eqref{eq:intcondQQ} implies \eqref{intcond1}, in interacting models there is no known case where
the two-site invariance \eqref{two-site} is not satisfied.

We can also require an integrability condition using the defining TM:
\begin{equation}
  \label{intcond2}
  \tau(u) \ket{\Psi}=\Pi\,\tau(u)\,\Pi \ket{\Psi}.
\end{equation}
The equivalence of \eqref{intcond2} and \eqref{intcond1} is not guaranteed. Typically the different
transfer matrices are algebraically dependent, which is established through the so-called fusion
relations (also known as the Hirota equation). In the case of $G=SU(2)$ these fusion relations
guarantee that \eqref{intcond2} and \eqref{intcond1} are equivalent, but for higher rank groups
it is possible that the
integrability conditions with TM's
corresponding to different representations have a different form \cite{gombor-private}.

We now give the explicit construction of the fundamental transfer
matrix with the $SU(2)$-symmetry, both in the compact and non-compact ones. The Lax operator at each site-$j$
is given by
\begin{align}
\label{eq:lax}
  L_{aj}(u)=u+i(\vec{\sigma}_a\cdot\vec{S}_j)=u+i\left(
  \sigma_a^zS_j^z+\sigma_a^-S_j^++\sigma_a^+S_j^-\right),
\end{align}
where it is understood that $S_\pm=S_x\pm iS_y$, and for the compact spin chain
$S^{\alpha}=\tfrac{1}{2}\sigma^{\alpha}$, whereas for the non-compact spin
chain the $S^z,S^\pm$ operators are given by (\ref{eq:sl2algebra}).

This TM satisfies a crossing relation. The Pauli matrices satisfy the relation
$\sigma^y\sigma^a\sigma^y=-(\sigma^a)^\rT$ with $a=x,y,z$, where the superscript $^\rT$ denotes
transposition. This implies
\begin{equation}
  \sigma^y_a L_{aj}(u)\sigma_a^y=-L_{aj}^{\rT_a}(-u).
\end{equation}
For the TM this means
\begin{equation}
  \tau(-u)=\Pi\tau(u)\Pi.
\end{equation}
The integrability condition is therefore equivalent to
\begin{equation}
  \label{int3}
   \tau(u) \ket{\Psi}= \tau(-u) \ket{\Psi}.
\end{equation}
We stress that this is not a generic feature of integrable models, and it is only valid for the
defining representation of the $SU(2)$-related models, and only with our specific choice for the
additive and multiplicative normalization of the local Lax operators.

In the non-compact case the integrability conditions have not yet been discussed before. We take
\eqref{intcond2} (or the equivalent conditon \eqref{int3}) as the fundamental definition of
integrability for the non-compact chain. Now we
show that this ensures the pair property for the overlaps, and thus the original condition
\eqref{eq:intcondQQ} and also \eqref{intcond1} will be satisfied.

It can be derived using the Algebraic Bethe Ansatz \cite{faddeev-how-aba-works}, that the eigenvalue
of the fundamental transfer
matrix on the Bethe state given by \eqref{eq:wavefunc} is
\begin{equation}
  \tau(u)=(u+i/2)^L\prod_{j=1}^N f(\lambda_j-u)+(u-i/2)^L\prod_{j=1}^N f(u-\lambda_j).
\end{equation}
Here we used the same notation $\tau(u)$ also for the eigenvalue.
This formula holds both in the compact XXX case and the non-compact chain, with the $f$-functions
given by \eqref{XXXfunct} and \eqref{eq:noncompactfunct}, respectively. It follows directly from the
integrability condition \eqref{int3} that the overlaps can be non-zero only when the corresponding eigenvalues
satisfy $\tau(u)=\tau(-u)$. This immediately leads to the requirement that the set $\lan$
be parity symmetric, both in the compact and non-compact cases.

\subsection{The compact chains}

In the literature two main classes of integrable states have been considered. The first class is the two-site
states which are defined as
\begin{equation}
  \ket{\Psi}=\otimes_{j=1}^{L/2} \ket{\psi},\qquad \ket{\psi}\in \complex^2\otimes\complex^2.
\end{equation}
It was shown in \cite{sajat-integrable-quenches} that in the XXX and XXZ models every two-site state
is integrable. Furthermore they correspond to integrable $K$-matrices through
\begin{equation}
  \label{psiK}
  \psi_{ab}=(K(\sigma)C)_a^b,
\end{equation}
where $C$ is constant matrix describing the so-called crossing transformation and $\sigma$ is a
special value for the rapidity parameter (for details see \cite{sajat-integrable-quenches}). The
$K$-matrix describes an
integrable boundary condition, and it satisfies the
standard Boundary Yang-Baxter (BYB) relation
\begin{equation}
  \label{BYB}
K_2(v) R_{21}(u+v) K_1(u)R_{12}(v-u)=
  R_{21}(v-u)K_1(u) R_{12}(u+v)K_2(v),
\end{equation}
where $R(u)$ is the so-called $R$-matrix in the fundamental representation, see  \cite{sajat-integrable-quenches}.

The physical meaning of the correspondence \eqref{psiK} is that an integrable boundary in space
(described by the $K$-matrix) is transformed into an integrable boundary in time (described by the boundary
state). This is the generalization of the same picture in integrable QFT, first developed by Ghoshal
and Zamolodchikov \cite{ghoshal-zamolodchikov}.

Another class of states is given by integrable matrix product states (MPS) defined as
\begin{equation}
  \label{MPSdef}
    |\Psi\rangle=\sum_{j_1,\dots,j_L=1}^2{\rm Tr}_{A}
\left[\omega_{j_L}\dots  \omega_{j_2}  \omega_{j_1}   \right]
|j_L,\dots,j_2,j_1\rangle.
\end{equation}
Here $\omega_{j}$, $j=1,2$ are matrices acting on one more auxiliary space denoted by $A$. The study
of such integrable MPS was initiated in the works \cite{zarembo-neel-cite1,zarembo-neel-cite3}, and
later it was shown in \cite{sajat-mps} that these states are also described by solutions of the BYBE, although the
corresponding $K$-matrices have an inner degree of freedom. The work  \cite{sajat-mps} also treated
two-site invariant MPS, and the two-site states above can be considered as MPS with ``trivial'', one dimensional
auxiliary space.

It was argued in \cite{sajat-minden-overlaps} that in the $SU(2)$-symmetric chains all integrable
MPS are obtained by the action of transfer matrices on two-site states. This is not true in spin
chains with higher
rank symmetries: the works \cite{sajat-mps,sajat-twisted-yangian} treated a number of ``indecomposable''
MPS's.

We note that in the higher rank cases there are two main types of integrable boundary conditions,
described by the original and the twisted BYB relations. The integrable initial states are always
related to the twisted case \cite{sajat-mps}. However, in the $SU(2)$ and $SO(N)$ related models the
two types of boundary
conditions are equivalent, which can be shown by a crossing relation, see \cite{sajat-mps} for a
detailed discussion on this issue. Here we do not treat this distinction and only refer to the
original BYB \eqref{BYB}.

\subsection{The non-compact chain}
Much less is known about integrable boundary states for non-compact spin chains compared to the
compact case. Here we present the first example which satisfy the integrability conditions. It
appears in the context of AdS/CFT \cite{yunfeng-structure-g} and an exact overlap formula has been
proposed. This integrable boundary state can be seen as a counterpart of the generalized N\'eel
state in the compact case \cite{zarembo-neel-cite2}.

\paragraph{A generalized N\'eel state} To introduce the integrable boundary state, it is more convenient to write the basis vectors of the Hilbert space as
\begin{align}
|n_1,n_2,\cdots,n_L\rangle\equiv|n_1\rangle\otimes |n_2\rangle\otimes\cdots\otimes|n_L\rangle,
\end{align}
where $n_j$ denotes the number of excitations at the site-$j$; to be more precise
\begin{equation}
  \ket{n_j}=\frac{(S_+)^{n_j}}{n_j!}\ket{0_j}.
\end{equation}
Assuming that  $L$ is even, we define a family of states which depend on a free parameter $\kappa$:
\begin{align}
\label{eq:Neelkappa}
|\text{N\'eel}_{\kappa}\rangle=\sum_{\{n_i\}}\left(\kappa^{N_{\text{odd}}}+\kappa^{N_{\text{even}}}\right)|n_1,n_2,\cdots,n_L\rangle,
\end{align}
where the summation for each $n_j$ runs over all non-negative integers. $N_{\text{odd}}$ and $N_{\text{even}}$ are the total number of excitations on odd and even sites
\begin{align}
N_{\text{odd}}=n_1+n_3+\cdots+n_{L-1},\qquad N_{\text{even}}=n_2+n_4+\cdots+n_L.
\end{align}
There are two special cases for this generalized N\'eel state. This first one is
$\kappa=1$, where $|\text{N\'eel}_{\kappa=1}\rangle$ is simply the sum over all basis vectors of
the Hilbert space. We will denote this state by $|X_F\rangle$ in what follows; it is a one-site
invariant ferromagnetic state.

The second special case is $\kappa=0$.
It follows from (\ref{eq:Neelkappa}) that the non-vanishing contributions at $\kappa=0$ are given by
$N_{\text{odd}}=0$ or $N_{\text{even}}=0$. The state $|\text{N\'eel}_0\rangle$ takes the form
\begin{align}
|\text{N\'eel}_0\rangle=\sum_{|j-k|\atop \text{even}}|\circ\cdots\circ\bullet_{j}\circ\cdots\circ\bullet_{k}\circ\cdots\rangle,
\end{align}
where the black dots stand for possible positions of excitations, and the sum is taken over all
possible distributions under the restriction that the distances between the black dots have to be
even. For example, for $L=4$, we have the following state
\begin{align}
|\text{N\'eel}_0\rangle=&\,|\circ\circ\circ\circ\rangle+|\bullet\circ\circ\circ\rangle+|\circ\bullet\circ\circ\rangle
+|\circ\circ\bullet\circ\rangle+|\circ\circ\circ\bullet\rangle\\\nonumber
&\,+|\circ\bullet\circ\bullet\rangle+|\bullet\circ\bullet\circ\rangle.
\end{align}
It is easy to see that the number of black dots cannot be larger than $L/2$.
The precise normalization for this notation is given by
\begin{align}
|\circ\rangle\equiv|0\rangle,\qquad |\bullet\rangle\equiv\sum_{n=1}^{\infty}|n\rangle.
\end{align}
Noticing that
\begin{align}
e^{S_+}|0\rangle=|0\rangle+\sum_{n=1}^{\infty}\frac{(S_+)^n}{n!}|0\rangle=|0\rangle+\sum_{n=1}^{\infty}|n\rangle=|\circ\rangle+|\bullet\rangle,
\end{align}
it is easy to see that $|\text{N\'eel}_{\kappa}\rangle$ can be written as
\begin{align}
|\text{N\'eel}_{\kappa}\rangle=\left(e^{\kappa S_+}|0\rangle\otimes e^{S_+}|0\rangle\right)^{L/2}+
\left(e^{S_+}|0\rangle\otimes e^{\kappa S_+}|0\rangle\right)^{L/2}.
\end{align}
Alternatively we can write
\begin{align}
|\text{N\'eel}_{\kappa}\rangle=e^{\kappa\mathcal{S}_+}|\Psi_{1-\kappa}\rangle,
\end{align}
where $\mathcal{S}_+=S_{+}^{(1)}+S_{+}^{(2)}+\cdots S_{+}^{(L)}$ is the $SL(2,\mathbb{R})$ generator for the full spin chain. The state $|\Psi_{\alpha}\rangle$ is defined by
\begin{align}
|\Psi_{\alpha}\rangle=\left(|0\rangle\otimes|\alpha\rangle\right)^{L/2}+\left(|\alpha\rangle\otimes|0\rangle\right)^{L/2},
\end{align}
where $|\alpha\rangle$ is the coherent state $|\alpha\rangle=e^{\alpha\,S_+}|0\rangle$. An on-shell Bethe state is the highest weight state of $SL(2,\mathbb{R})$ and hence
\begin{align}
\mathcal{S}_-|\bm{\lambda}_N\rangle=0.
\end{align}
Therefore we have
\begin{align}
\label{eq:highestweightkappa}
\langle\text{N\'eel}_{\kappa}|\bm{\lambda}_N\rangle=\langle\Psi_{1-\kappa}|e^{\kappa\mathcal{S}_-}|\bm{\lambda}_N\rangle
=\langle\Psi_{1-\kappa}|\bm{\lambda}_N\rangle
\end{align}
From the definition of $|\Psi_{\alpha}\rangle$, it is easy to see that
\begin{align}
\langle\Psi_{\alpha}|\bm{\lambda}_N\rangle=\alpha^N\,\langle\text{N\'eel}_0|\bm{\lambda}_N\rangle
\end{align}
Combing this equation with (\ref{eq:highestweightkappa}), we arrive at the following relation:
\begin{align}
\label{eq:weighedNeel}
\langle\text{N\'eel}_\kappa|\bm{\lambda}_N\rangle=(1-\kappa)^N\langle\text{N\'eel}_0|\bm{\lambda}_N\rangle,
\end{align}
where $N$ is the number of rapidities of $|\bm{\lambda}_N\rangle$.

Now we prove that $|\Psi_{\alpha}\rangle$ is indeed integrable by the criteria given in
section~\ref{sec:gencriteria}, namely the condition \eqref{intcond2} holds for it.
The strategy for the proof of integrability was developed in \cite{sajat-mps}, a closely
related method already appeared in \cite{kristjansen-proofs}.
The idea is to write both sides of
\eqref{intcond2} as a MPS, and to find a similarity transformation that connects the matrices
involved. This similarity transformation can be identified with the integrable $K$-matrix
\cite{sajat-mps}.

First of all, it is clear that
\begin{align}
\Pi|\Psi_{\alpha}\rangle=|\Psi_{\alpha}\rangle.
\end{align}
To proceed, it is useful to compute the action of the Lax operator at each site. We have
\begin{align}
L_{aj}(u)|0\rangle_j=\left(
                       \begin{array}{cc}
                         u+\tfrac{i}{2} & 0 \\
                         0 & u-\tfrac{i}{2} \\
                       \end{array}
                     \right)|0\rangle_j
+\left(
   \begin{array}{cc}
     0 & 0 \\
     i & 0 \\
   \end{array}
 \right)\,S_+^{(j)}|0\rangle_j
\end{align}
and
\begin{align}
L_{aj}(u)|\alpha\rangle_j=\left(
                     \begin{array}{cc}
                       u+\tfrac{i}{2} & i\alpha \\
                       0 & u-\tfrac{i}{2} \\
                     \end{array}
                   \right)|\alpha\rangle_j
+\left(
   \begin{array}{cc}
     i\alpha & i\alpha^2 \\
     i & -i\alpha \\
   \end{array}
 \right)\,S_+^{(j)}|\alpha\rangle_j,
\end{align}
where we have used (\ref{eq:SSalpha}) which is derived in the appendix. Taking direct product of two sites, we have
\begin{align}
\label{eq:LLtwosite}
&L_{a,j}(u)L_{a,j+1}(u)|0,\alpha\rangle_{j,j+1}=\sum_{i=1}^4 A_i|i\rangle\!\rangle_{j,j+1},\\\nonumber
&L_{a,j}(u)L_{a,j+1}(u)|\alpha,0\rangle_{j,j+1}=\sum_{i=1}^4 \tilde{A}_i|\tilde{i}\rangle\!\rangle_{j,j+1}.
\end{align}
The states are given by
\begin{align}
|1\rangle\!\rangle_{j,j+1}=&|0\rangle_j\otimes|\alpha\rangle_{j+1},& |\tilde{1}\rangle\!\rangle_{j,j+1}=&|\alpha\rangle_j\otimes|0\rangle_{j+1},\\\nonumber
|2\rangle\!\rangle_{j,j+1}=&S_+^{(j)}|0\rangle_j\otimes|\alpha\rangle_{j+1},& |\tilde{2}\rangle\!\rangle_{j,j+1}=&|\alpha\rangle_j\otimes S_+^{(j+1)}|0\rangle_{j+1},\\\nonumber
|3\rangle\!\rangle_{j,j+1}=&|0\rangle_j\otimes S_+^{(j+1)}|\alpha\rangle_{j+1},& |\tilde{3}\rangle\!\rangle_{j,j+1}=&S_+^{(j)}|\alpha\rangle_j\otimes |0\rangle_{j+1},\\\nonumber
|4\rangle\!\rangle_{j,j+1}=&S_+^{(j)}|0\rangle_j\otimes S_+^{(j+1)}|\alpha\rangle_{j+1},& |\tilde{4}\rangle\!\rangle_{j,j+1}=&S_+^{(j)}|\alpha\rangle_j\otimes S_+^{(j+1)} |0\rangle_{j+1}.
\end{align}
The matrices $A_i$ and $\tilde{A}_i$ are given by
\begin{align}
A_1=&\left(
      \begin{array}{cc}
        (u+i/2)^2 & i\alpha(u+i/2) \\
        0 & (u-i/2)^2 \\
      \end{array}
    \right),& \tilde{A}_1=&\left(
                           \begin{array}{cc}
                             (u+i/2)^2 & i\alpha(u-i/2) \\
                             0 & (u-i/2)^2 \\
                           \end{array}
                         \right),\\\nonumber
A_2=&\left(
      \begin{array}{cc}
        0 & 0 \\
        i(u+i/2) & -\alpha \\
      \end{array}
    \right),& \tilde{A}_2=&\left(
                           \begin{array}{cc}
                             -\alpha & 0 \\
                             i(u-i/2) & 0 \\
                           \end{array}
                         \right)\\\nonumber
A_3=&\left(
      \begin{array}{cc}
        i\alpha(u+i/2) & i\alpha^2(u+i/2) \\
        i(u-i/2) & -i\alpha(u-i/2) \\
      \end{array}
    \right),&
\tilde{A}_3=&\left(
               \begin{array}{cc}
                 i\alpha(u+i/2) & i\alpha^2(u-i/2) \\
                 i(u+i/2) & -i\alpha(u-i/2) \\
               \end{array}
             \right)\\\nonumber
A_4=&\left(
       \begin{array}{cc}
         0 & 0 \\
         -\alpha & -\alpha^2 \\
       \end{array}
     \right),&\tilde{A}_4=&\left(
                             \begin{array}{cc}
                               -\alpha^2 & 0 \\
                               \alpha & 0 \\
                             \end{array}
                           \right).
\end{align}

A crucial observation for our proof is that $A_i$ and $\tilde{A}_i$ are related by
\begin{align}
\label{eq:conjugateCA}
\tilde K(u)\,A_i\,\tilde K(u)^{-1}=\tilde{A}_i^{\mathrm{T}},\qquad i=1,2,3,4,
\end{align}
with the matrix $\tilde K(u)$ given by
\begin{align}
\tilde K(u)=\left(
    \begin{array}{cc}
      2u & (u+i/2)\alpha \\
      (u-i/2)\alpha & 0 \\
    \end{array}
  \right).
\end{align}
It was shown in \cite{sajat-mps} that such intertwiners can be interpreted as
integrable $K$-matrices. In fact, defining
\begin{equation}
  K(u)= \tilde K(u) \sigma^y
\end{equation}
we obtain a solution to the BYB equations \eqref{BYB}. The presence of the crossing matrix
$\sigma^y$ is a feature of the $SU(2)$-related models, see the discussion above.

Using (\ref{eq:LLtwosite}) we can write down the action of the transfer matrix on $|\Psi\rangle$ as
\begin{align}
\tau(u)|\Psi_{\alpha}\rangle=&\,\tr\left(L(u)|0\rangle\otimes L(u)|\alpha\rangle\right)^{L/2}+
\tr\left(L(u)|\alpha\rangle\otimes L(u)|0\rangle\right)^{L/2}\\\nonumber
=&\,\tr\big[A_{i_1}A_{i_2}\cdots A_{i_{L/2}}\big]|i_1,i_2,\cdots,i_{L/2}\rangle+
\tr\big[\tilde{A}_{i_1}\tilde{A}_{i_2}\cdots \tilde{A}_{i_{L/2}}\big]|\tilde{i}_1,\tilde{i}_2,\cdots,\tilde{i}_{L/2}\rangle,
\end{align}
where repeated indices are summed over from 1 to 4 and the trace is taken over the auxiliary space. The states are defined by
\begin{align}
|i_1,i_2,\cdots,i_{L/2}\rangle=&\,|i_1\rangle\!\rangle\otimes|i_2\rangle\!\rangle\otimes\cdots\otimes |i_{L/2}\rangle\!\rangle,\\\nonumber
|\tilde{i}_1,\tilde{i}_2,\cdots,\tilde{i}_{L/2}\rangle=&\,|\tilde{i}_1\rangle\!\rangle\otimes|\tilde{i}_2\rangle\!\rangle\otimes\cdots\otimes |\tilde{i}_{L/2}\rangle\!\rangle.
\end{align}
Acting the reflection operator, we obtain
\begin{align}
\label{eq:Pitau}
\Pi\,\tau(u)|\Psi_{\alpha}\rangle=&\,\tr\big[A_{i_1}\cdots A_{i_{L/2}}\big]\Pi|i_1,\cdots,i_{L/2}\rangle
+\tr\big[\tilde{A}_{i_1}\cdots\tilde{A}_{i_{L/2}}\big]\Pi|\tilde{i}_1,\cdots,\tilde{i}_{L/2}\rangle\\\nonumber
=&\,\tr\big[A_{i_1}\cdots A_{i_{L/2}}\big]|\tilde{i}_{L/2},\cdots,\tilde{i}_1\rangle
+\tr\big[\tilde{A}_{i_1}\cdots \tilde{A}_{i_{L/2}}\big]|{i}_{L/2},\cdots,{i}_1\rangle.
\end{align}
Now using the relation (\ref{eq:conjugateCA}) we can show easily
\begin{align}
&\tr\big[\tilde{A}_{i_1}\cdots\tilde{A}_{i_{L/2}}\big]=\tr\big[A_{i_{L/2}}\cdots A_{i_1}\big],\\\nonumber
&\tr\big[{A}_{i_1}\cdots{A}_{i_{L/2}}\big]=\tr\big[\tilde{A}_{i_{L/2}}\cdots \tilde{A}_{i_1}\big].
\end{align}
Plugging into the second line of (\ref{eq:Pitau}), we find
\begin{align}
\Pi\,\tau(u)|\Psi_{\alpha}\rangle=\Pi\,\tau(u)\,\Pi|\Psi_{\alpha}\rangle=\tau(u)|\Psi_{\alpha}\rangle
\end{align}
which demonstrates that the state $|\Psi_{\alpha}\rangle$ is an integrable boundary state.

\subsection{Exact overlap formulae}
The integrability condition for the boundary state $|\Psi\rangle$ leads to a number of non-trivial consequences which we discuss below.
\paragraph{Paired Bethe roots} It was first argued in \cite{sajat-integrable-quenches}, that the
condition \eqref{intcond1} imposes a strict selection rule for the overlaps between $|\Psi\rangle$
and on-shell Bethe states $|{\lan}\rangle$. Namely, the overlap
\begin{equation}
  \skalarszorzat{\Psi}{\lan}
\end{equation}
is non-zero only if the set of the Bethe roots is \emph{parity symmetric}. In the case of an even
number of particles this means that they come in pairs:
\begin{equation}
\label{eq:lambdapair}
  \{\bm{\lambda}_N\}=\{\lambda_1,-\lambda_1,\cdots,\lambda_{N/2},-\lambda_{N/2}\}.
\end{equation}
which will also be denoted as
\begin{align}
\{\bm{\lambda}_N\}=\{\bm{\lambda}_{N/2}^+,-\bm{\lambda}_{N/2}^+\}.
\end{align}
Here $\{\bm{\lambda}_{N/2}^+\}$ denotes the positive Bethe roots\footnote{In principle, it does not matter which root among the pair we call `positive'. As a convention, we can choose the one with positive real part as the positive Bethe root.}.
When the number of particles is odd, we have
\begin{align}
\{\bm{\lambda}_N\}=\{\lambda_1,-\lambda_1,\cdots,\lambda_{(N-1)/2},-\lambda_{(N-1)/2},0\}.
\end{align}
In this work we only consider overlaps with Bethe states with even numbers of particles. The cases
with odd number of Bethe roots can be treated similarly. For earlier studies with an odd number of
particles see \cite{sajat-marci-boundary,Caux-Neel-overlap2,sajat-twisted-yangian}.

\paragraph{Factorized Gaudin norm} It is well-known that the norm of the on-shell Bethe state
constructed in \eqref{Bethel} can be expressed as  \cite{korepin-norms}
\begin{equation}
  \skalarszorzat{\lan}{\lan}=
\prod_{j=1}^N \frac{1}{p'(\lambda_j)}  \prod_{j<k}^N f(\lambda_j-\lambda_k)f(\lambda_k-\lambda_j)\times
\det G,
\end{equation}
where $G$ is an $N\times N$ matrix known as the Gaudin matrix whose elements are
\begin{equation}
  G_{jk}=\delta_{jk}\left[p'(\lambda_j)L+\sum_{l=1}^N \varphi(\lambda_j-\lambda_l)\right]
-\varphi(\lambda_j-\lambda_k).
\end{equation}
The function $\varphi(\lambda)$ is defined as
\begin{equation}
  \varphi(\lambda)=-i \frac{d}{d\lambda}\log S(\lambda).
\end{equation}
The norm of an on-shell Bethe state whose rapidities are paired as in (\ref{eq:lambdapair})
factorizes further. For such symmetric states the Gaudin matrix has a block structure and the determinant can be
factorized as
\begin{equation}
  \label{Gpm}
  \det G=\det G^+\det G^-,
\end{equation}
where $G^{\pm}$ are $\tfrac{N}{2}\times\tfrac{N}{2}$ matrices with matrix elements
\begin{align}
\label{eq:Gpm}
G^\pm_{jk}=\delta_{jk}\left[p'(\lambda^+_j)L+\sum_{l=1}^{N/2} \varphi^+(\lambda_j^+,\lambda_l^+)\right]
-\varphi^\pm(\lambda^+_j,\lambda^+_k)
\end{align}
with
\begin{equation}
    \varphi^\pm(\lambda,\mu)=\varphi(\lambda-\mu)\pm \varphi(\lambda+\mu).
\end{equation}
The norm is then written as
\begin{equation}
\label{pairednorm}
  \begin{split}
&  \skalarszorzat{\lan}{\lan}=
\prod_{j=1}^{N/2} \frac{f(2\lambda_j^+) f(-2\lambda_j^+)}{(p'(\lambda^+_j))^2}
\prod_{1\le j<k\le N/2}
 \left[
\bar f(\lambda^+_j,\lambda^+_k)
\right]^2
\times
\det G^+\det G^-,
 \end{split}
\end{equation}
where we defined
\begin{equation}
  \bar f(\lambda,\mu)=f(\lambda-\mu)f(\lambda+\mu)f(-\lambda-\mu)f(-\lambda+\mu).
\end{equation}

\paragraph{Exact overlap formulae} The most important property is that the non-zero overlaps between
many integrable boundary states and on-shell Bethe states take a remarkably simple form:
\begin{equation}
  \label{ovgeneral}
\frac{\left|\skalarszorzat{\Psi}{\lan}\right|^2}{\skalarszorzat{\lan}{\lan}}
=
\prod_{j=1}^{N/2} u(\lambda_j^+)\times
  \frac{\det G^+}{\det G^-}.
\end{equation}
Here $u(\lambda)$ is the so-called one particle overlap function, which depends on the initial
state, and $G^\pm$ are the same matrices
that appeared in the factorized Gaudin norm. Below we will prove this overlap formula in a number of cases.

If the integrable boundary state is a simple product state, then all known cases involve only a single product as in (\ref{ovgeneral}). However, for other states such as the integrable MPS, the pre-factor in front
of the ratio of determinants can take more complicated forms. For more details, see the discussions in
\cite{sajat-twisted-yangian}. We put forward that our present method allows for a rigorous proof only in those
  cases when the overlap involves only a single product.

The simple form for the exact overlap formula (\ref{ovgeneral}) seems to hold for both the compact and
non-compact spin chains.
In the case of the compact chain the one-particle overlap function $u(\lambda)$ can be determined by
a ``rotation trick''  \cite{sajat-integrable-quenches,sajat-minden-overlaps}. The idea is to relate
the quantum system to a 2 dimensional classical lattice model, and to build partition functions that
are afterwards evaluated using the so called Quantum Transfer Matrix in the ``rotated channel'',
after rotating the lattice by 90$^{\circ}$.
For non-compact spin chains, the local Hilbert space at each site is infinite dimensional and the
rotation trick cannot be applied in a straightforward way. Therefore a new method is called for. Below we
develop such a method for proving the
exact overlap formula of the non-compact spin chain.


\section{Exact overlap formulae -- General strategy}
\label{sec:overlap}

In this section we explain the general strategy of our method. We postpone the concrete computations
for different integrable boundary states to Section \ref{sec:cases}.

The method is most easily
demonstrated on the compact XXX and XXZ chains, with the initial state being
\begin{equation}
  \label{XFdef}
  \ket{\Psi}=\ket{X_F}\equiv \otimes_{j=1}^L
  \begin{pmatrix}
    1 \\ 1
  \end{pmatrix}.
\end{equation}
The overlap of a given Bethe state with this state is particularly simple, because each spin
configuration has the same
weight in the overlap. The result is thus simply the sum over the wave function coefficients.

Regarding the Bethe states as given by  \eqref{Bethel},
the un-normalized overlaps are
\begin{equation}
  \label{XF1}
  \skalarszorzat{X_F}{\lan}=
 \sum_{\sigma\in S_N} \prod_{j>k} f(l_{\sigma_j},l_{\sigma_k})
 \sum_{0\le x_1<\dots <x_N\le L-1}
\prod_{j=1}^N l_{\sigma_j}^{x_j}.
\end{equation}
Such an overlap is a rational function of the set $\{l_1,\dots,l_N\}$. For this set of variables we will also use the
notation $\lN$.

We want to evaluate this rational function for the $\lN$ which satisfy the
Bethe equations \eqref{BEl}. These equations depend on $L$, therefore the first natural question is:
how do the overlaps depend on the length of the spin chain $L$?

The scalar products \eqref{XF1} carry a formal
dependence on $L$, which is hidden in the summation limits. It is our goal to make this dependence more
explicit. We will see that the summations can be performed using algebraic manipulations, such that
eventually \eqref{XF1} will be expressed as rational functions of two
sets of variables $\lN$ and $\aN=\{a_1,\dots,a_N\}$, where the $a$-variables were introduced in
\eqref{adef}.
We will see that
there will be no
further $L$-dependence. It will be this rational function where we can ``substitute the Bethe
equations'' such that the on-shell values of the overlaps can be obtained.

In order to explain the method we first consider the simplest examples.

\subsection{One-particle states}

In this case the overlap
is given by the simple sum
\begin{equation}
  \skalarszorzat{X_F}{\lambda_1}=\sum_{j=0}^{L-1} l^j_1
\end{equation}
This sum can be computed readily
\begin{equation}
  \label{ov1a}
  \skalarszorzat{X_F}{\lambda_1}=
  \begin{cases}
    L & \text{ if } l_1=1\\
  \frac{a_1-1}{l_1-1} & \text{ if } l_1\ne 1
  \end{cases}.
\end{equation}
Here we already used the new auxiliary variable $a_1=l_1^L$.

The above formulae refer to the off-shell case: they are valid for arbitrary $l_1$. Let us now
investigate the on-shell case. In the one-particle case the Bethe equation is simply
\begin{equation}
  a_1=(l_1)^L=e^{ip_1L}=1.
\end{equation}
Assuming that $l_1\ne 1$ we can substitute this into \eqref{ov1a}, and we see that the overlap
vanishes for all on-shell states with
$l_1\ne 1$. However, we will be interested in the on-shell states with non-vanishing overlap,
therefore we need to consider the case $l_1=1$.

In this simple one-particle problem the summation for the exceptional case $l_1=1$ is rather
trivial, and already given in \eqref{ov1a}. However, in order to get experience for the more
complicate cases we also derive this using a limiting procedure:
we use the continuity of the
scalar product, and investigate the $l_1\to 1$ limit of the $l_1\ne 1$ case of  \eqref{ov1a}. This gives
\begin{equation}
  \skalarszorzat{X_F}{\lambda_1=0}=\lim_{l_1\to 0}\frac{a_1-1}{l_1-1}=
\lim_{p_1\to 0}\frac{e^{ip_1L}_1-1}{e^{ip_1}-1}=
  L,
\end{equation}
where we used the definition of the $a$- and $l$-variables.

Even though this is a trivial example, it already highlights a crucial observation: having computed
a generic off-shell overlap, \emph{the operations
of ``substituting the Bethe equations'' and ``taking the limit towards the parity invariant states''
do not commute, and it is important to perform the second step first.}

\subsection{Two-particle states}

We now consider the two-particle case. The structure of the overlaps of the integrable boundary state and two-particle
  states has been studied in \cite{zarembo-neel-cite1}, where the role of the apparent pole (to be
  discussed below) was explained.

In this case the overlap is given by the summation
\begin{equation}
  \label{XF2}
  \skalarszorzat{X_F}{\lambda_1,\lambda_2}=
  f(l_{2},l_{1})
 \sum_{0\le x_1<x_2\le L-1}
 l_1^{x_1}l_2^{x_2}+
   f(l_{1},l_{2})
 \sum_{0\le x_1<x_2\le L-1}
l_2^{x_1}l_1^{x_2}.
\end{equation}
Let us now introduce the function
\begin{equation}
  B_2(l_1,l_2|L)= \sum_{0\le x_1<x_2\le L-1} l_1^{x_1}l_2^{x_2}.
\end{equation}
Assuming that
\begin{equation}
  \label{cond2}
  l_1\ne 1,\quad l_2\ne 1,\quad l_1l_2\ne 1
\end{equation}
we can perform the summation explicitly, yielding
\begin{equation}
   B_2(l_1,l_2|L)=\frac{(l_1l_2)^{L}-1}{(l_1l_2-1)(l_1-1)}-\frac{l_2^L-1}{(l_2-1)(l_1-1)}.
\end{equation}
Substituting this back into \eqref{XF2} and the introducing the $a$-variables the overlap can be written
as
\begin{equation}
  \label{XF3}
  \begin{split}
     \skalarszorzat{X_F}{\lambda_1,\lambda_2}=&
     f(l_{2},l_{1})\left[
       \frac{a_1a_2-1}{(l_1l_2-1)(l_1-1)}-\frac{a_2-1}{(l_2-1)(l_1-1)}\right]+\\
&   f(l_{1},l_{2})\left[
       \frac{a_1a_2-1}{(l_1l_2-1)(l_2-1)}-\frac{a_1-1}{(l_2-1)(l_1-1)}\right].
   \end{split}
\end{equation}

Let us now substitute the Bethe equations which in this case read
\begin{equation}
  a_1=\frac{f(l_2,l_1)}{f(l_1,l_2)},\qquad   a_2=\frac{f(l_1,l_2)}{f(l_2,l_1)}.
\end{equation}
It can be seen by direct computation that after substitution we get identically zero! This means
that all on-shell overlaps vanish, unless one of the conditions in \eqref{cond2} is broken. Note
that we did not use the specific form of the function $f(l_1,l_2)$: the vanishing of the overlap
follows directly from the functional form of the Bethe wave function.

The non-vanishing overlaps are obtained in the special cases, where $l_1=1$, $l_2=1$ or
$l_1l_2=1$. For on-shell states we can not have $l_1=1$ or $l_2=1$ except for very special cases
of fine tuned solutions. On the other hand, the condition
\begin{equation}
  l_1l_2=e^{i(p_1+p_2)}=1
\end{equation}
is very natural: this is the requirement for the pair structure in the rapidities!

In order to get the overlaps with $l_1l_2=1$ we can choose two ways: either we compute the function
$B_2$ directly for this special case, or we perform the limiting procedure from off-shell rapidities
to on-shell solutions with
$l_1l_2=1$. We choose the second method because it can be generalized to the multi-particle
cases.

If we regard the expression \eqref{XF3} as a function of 4 variables $l_1,l_2$ and $a_1,a_2$, then
it has a pole $1/(l_1l_2-1)$ associated with the pair condition. The overlap itself is a regular
function of the original $l$-variables, therefore the residue has to be zero in the physical case,
when $a_j=l_j^L$. Collecting the terms for the residue around $l_1l_2=1$ gives
\begin{equation}
  \skalarszorzat{X_F}{\lambda_1,\lambda_2}\sim
  \frac{a_1a_2-1}{l_1l_2-1}\left[
    \frac{  f(l_{2},l_{1})}{l_1-1}+  \frac{  f(l_{1},l_{2})}{l_2-1}
\right].
  \end{equation}
In the physical case $a_j=l_j^L$, and the pre-factor is a finite expression of the type $0/0$;
its finite value is actually $L$. Now we argue that the finite value of the overlap comes only
from this apparent pole: all other contributions to the overlap add up to zero for on-shell states, because they are
zero for a generic configuration satisfying \eqref{cond2}. We thus obtain the exact result for
on-shell states with the pair structure:
\begin{equation}
    \skalarszorzat{X_F}{\lambda_1,-\lambda_1}=L
\left[
    \frac{  f(l_{2},l_{1})}{l_1-1}+  \frac{  f(l_{1},l_{2})}{l_2-1}
\right],\qquad\text{with } l_2=\frac{1}{l_1}.
\end{equation}

\subsection{Multi-particle states}

The general strategy for the overlaps will mirror the one-particle case. First we introduce some
definitions and auxiliary functions.

We call a set of Bethe rapidities $\lan$ zero-free, if there is no
subset of $\lan$ where the sum of the rapidities is zero. Accordingly,
the set $\lN$ is zero free, when there is no subset of the
$l$-variables such that their product is 1. States with the
pair structure are clearly {\it not zero-free}: they are the
exceptional states that lead to non-zero overlaps.

Here we investigate overlaps with more general integrable initial states. For simplicity we still
restrict ourselves to product states, but we allow for an arbitrary two-site state, thus we consider
\begin{equation}
  \ket{\Psi}=\otimes_{j=1}^{L/2} \ket{\psi}, \qquad \ket{\psi}\in \mathcal{H}_j\otimes\mathcal{H}_{j+1}.
\end{equation}
In the XXZ chain all two-site states are integrable \cite{sajat-integrable-quenches}, but in models
with higher dimensional local spaces the integrability condition puts a restriction on
$\ket{\psi}$. Note that the one-site invariant product state considered above is a special case of
such two-site states.

The overlap with the reference state is
\begin{equation}
  \skalarszorzat{\Psi}{\Omega}=(\psi_{00})^{L/2},
  \end{equation}
where $\psi_{00}$ denotes the two-site overlap between the initial state and the reference state. In
the compact cases it is given by $\psi_{00}=\skalarszorzat{\psi}{\uparrow\uparrow}$, and in the
non-compact case by $\psi_{00}=\skalarszorzat{\psi}{00}$.

For simplicity we focus on cases where $\psi_{00}\ne 0$. Furthermore we set the normalization to
$\psi_{00}=1$, such that the overlap with the reference state is always 1. Initial states with
$\psi_{00}=0$ can be treated with a limiting procedure, see for example the case of the N\'eel state
below.

We consider the overlaps
\begin{equation}
\SSS_N(\lan)=\skalarszorzat{\Psi}{\lan}
\end{equation}
with the Bethe states given in \eqref{Bethel}.
It follows from the explicit form of the wave function that every such an overlap
is a rational function of the $l$-variables. The $L$ dependence is
hidden in the summation limits. We will show below that for
zero-free sets the
summations can be performed explicitly, yielding formulae that only involve the
$l_j$ and $a_j=(l_j)^L$ for each $j$, but they do not depend on the
volume $L$ in any other way.

Let us therefore introduce the function $\SSS_N(\lan,\aN)$, which is
obtained after these formal manipulations, and after introducing the $a$-variables:
\begin{equation}
  \SSS_N(\lN,\aN)=  \skalarszorzat{\Psi}{\lan}_{summed}.
\end{equation}
Regarded as a function of a total number of $2N$ variables, this
function does not depend on $L$ anymore. It follows from the form of the wave function and the
real space summations that these functions can always be written as
\begin{equation}
  \label{SBN}
   \SSS_N(\lN,\aN)=\sum_{\sigma\in S_N}  \prod_{j>k} f(l_{\sigma_j},l_{\sigma_k})
B_N(\sigma\lN,\sigma\aN),
\end{equation}
where $B_N$ is the ``kinematical'' part of the overlap, which arises from a simple real space
summation. It depends on the initial state; explicit formulae will be given below. In the formula
above it is understood that $\sigma\lN,\sigma\aN$ are the permutations of the corresponding ordered sets, namely
\begin{align}
\sigma\bm{l}_N=\{l_{\sigma_1},l_{\sigma_2},\cdots,l_{\sigma_N}\},\quad
\sigma\bm{a}_N=\{a_{\sigma_1},a_{\sigma_2},\cdots,a_{\sigma_N}\}.
\end{align}
The quantity $B_N$ for some special cases was already defined and computed in
  \cite{zarembo-neel-cite1}. An analogous computation for a non-integrable overlap was
  performed recently in \cite{non-int-ov}.

Let us also define the function $\tilde \SSS_N(\lN)$ which is obtained from $\mathcal{S}_N$ by the formal
substitution of the Bethe equations. This means that for each $a_j$ we
substitute the r.h.s. of the corresponding equation from
\eqref{BEl}. It is clear from the above that $\tilde \SSS_N$ is a symmetric rational function of the set $\lN$.

\begin{thm}
  \label{thm1}
The rational function $\tilde \SSS_N(\lN)$ is identically zero.
\end{thm}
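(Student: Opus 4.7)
The function $\tilde\SSS_N(\lN)$ is a symmetric rational function of $\lN$ alone, obtained from the off-shell overlap $\SSS_N(\lN,\aN)$ by the formal substitution $a_j\to\prod_{k\neq j} f(l_k,l_j)/f(l_j,l_k)$ dictated by the Bethe equations \eqref{BEl}. The two-particle computation already performed in the text indicates the mechanism clearly: after this substitution, every contribution in the sum over permutations cancels pairwise, leaving nothing. The plan is to combine the physical pair-structure selection rule with the rationality of $\tilde\SSS_N$ to turn this observation into a proof valid for all $N$.

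The central observation is that, for any length $L$, an on-shell Bethe configuration $\lan$ satisfies both $a_j=l_j^L$ and $a_j=\prod_{k\neq j}f(l_k,l_j)/f(l_j,l_k)$. At such a point the off-shell overlap $\SSS_N(\lN,\aN)$ coincides simultaneously with the physical overlap $\skalarszorzat{\Psi}{\lan}$ and with $\tilde\SSS_N(\lN)$. By the pair-structure selection rule derived in Section~\ref{sec:gencriteria} from the integrability condition \eqref{intcond1}, this physical overlap vanishes whenever the set $\lan$ is zero-free, i.e.\ no subset of its rapidities sums to zero. Hence $\tilde\SSS_N$ vanishes at every zero-free on-shell Bethe configuration, for every positive integer $L$.

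To upgrade this pointwise vanishing to identical vanishing of a rational function, one needs the union of these discrete zero sets to be Zariski dense in $\mathbb{C}^N$; this is the main obstacle. My preferred route is a perturbative argument: given any generic target $\lN^\ast$, one solves the Bethe equations for sufficiently large $L$ in a neighborhood of $\lN^\ast$ via the implicit function theorem and extracts a zero-free on-shell configuration arbitrarily close to $\lN^\ast$; combined with the rationality of $\tilde\SSS_N$ and standard holomorphic continuation, this forces $\tilde\SSS_N\equiv 0$. A more algebraic alternative is an induction on $N$: view $\tilde\SSS_N$ as a meromorphic function of a single variable $l_N$ with the remaining $l_j$ held generic, enumerate the candidate poles at $l_N=1$ and $l_N l_j=1$, and show using the Bethe equations together with the inductive hypothesis that both the residues at these apparent poles and the remaining polynomial part vanish. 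Either route requires as a prerequisite a manageable closed-form expression for the kinematical factor $B_N$ appearing in \eqref{SBN}, which I expect to be derived in the subsequent sections before the theorem is formally proved.
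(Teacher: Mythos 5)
Your proposal follows essentially the same route as the paper's own proof: $\tilde\SSS_N$ is $L$-independent, it coincides with the physical overlap at on-shell points, the integrability selection rule forces it to vanish at every zero-free Bethe solution for every volume $L$, and rationality then upgrades this to identical vanishing. The one place you go beyond the paper is the density step, and there you are in fact more careful than the authors: the paper simply asserts that a rational function vanishing at infinitely many points is identically zero, which for $N\ge 2$ variables is not a valid inference without showing that the zero-free on-shell configurations are Zariski dense (or fill out a totally real $N$-dimensional set); your large-$L$ implicit-function-theorem argument is a sensible way to supply exactly this missing justification.
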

\begin{proof}
The function $\tilde \SSS_N$ does not depend on the volume anymore, it only depends on the
  $l$-variables.
  In the definition of $\SSS_N$ we assumed that the set of rapidities is zero-free.
  The zero-free sets can not satisfy the integrability condition, therefore their overlaps have to
  be zero.
  This implies, that the function $\tilde \SSS_N$ vanishes for all those sets $\lN$ that are zero-free solutions
  to the Bethe equations for {\it any volume}.
This means that the rational function $\tilde \SSS_N$ vanishes at an infinite number of points,
therefore it is identically zero.
\end{proof}

The non-vanishing overlaps are obtained from $\SSS_N$ by a limiting
procedure similar to the two-particle case detailed above. The key
observation is that for each pair of rapidities (or $l$-variables $l_j$,$l_k$)
there is an apparent simple pole of $S_N$, which is proportional to
\begin{equation}
  \frac{a_ja_k-1}{l_jl_k-1}.
\end{equation}
In the physical cases, when the $a$-variables are actually given by
$a_j=(l_j)^L$, such a factor simply produces $L$. However,
it is important that we can substitute the Bethe equation only {\it
  after} these pole contributions are correctly
evaluated. Furthermore, all non-zero terms in the overlap can only
come from such terms, because if we substitute the Bethe equations {\it before}
the limit, we get zero identically.

Now we compute $\SSS_N$ for paired
rapidities. We regard $\bm{l}_N$ and $\bm{a}_N$ as
independent variables in the intermediate steps of the computation. We can
still assume that there is a well-defined function $a(l)$ connecting
the $l$- and $a$-variables, but we do not require the relation
$a(l)=l^L$ anymore. We will see below that a recursive computation of
the overlaps will require to treat more general $a(l)$ functions.

We will consider the limit
\begin{equation}
  \label{pair1}
  l_{2j-1}l_{2j}\to 1,\quad a_{2j-1}a_{2j}\to 1, \qquad j=1,\dots,N/2.
\end{equation}

Let us now investigate the apparent pole at say $l_1l_2=1$.
\begin{prop}
 The formal pole of $\SSS_N$ around the point  $l_1l_2=1$
 is of the form
 \begin{equation}
   \label{Srec}
  \SSS_N(L)\sim
 \frac{a_1a_{2}-1}{l_1l_{2}-1}
  F(\lambda_1)
   \mathop{\prod_{j=3}^{N}} f(\lambda_1-\lambda_j) f(-\lambda_1-\lambda_j)
\SSS^{\text{mod}}_{N-2}(\cancel{1},\cancel{2},L),
\end{equation}
where $\SSS^{\text{mod}}_{N-2}$ is the formal overlap for $N-2$ particles
not including 1 and 2, evaluated with the following modified
 $a$-variables:
\begin{equation}
  \label{amod}
  a^{\text{mod}}_j=
  \frac{f(l_j,l_1)}{f(l_1,l_j)}
  \frac{f(l_j,1/l_1)}{f(1/l_1,l_j)}
  a_j.
\end{equation}
In \eqref{Srec} $F(\lambda)$ is a rational function which carries the dependence
on the initial state.
\end{prop}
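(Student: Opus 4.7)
\emph{Proof proposal.}
The plan is to compute the residue of the rational function $\SSS_N(\lN,\aN)$ at the apparent pole $l_1 l_2=1$ directly from the permutation-sum representation \eqref{SBN}. First I will work out $B_N(\lN,\aN)$ explicitly by carrying out the nested geometric sum $\sum_{0\le x_1<\cdots<x_N\le L-1}\prod_j l_j^{x_j}$ against the two-site product-state weights. A recursive partial-fraction expansion (generalising the two-particle identity used in Section~\ref{sec:overlap}) shows that $B_N$ decomposes into a sum of rational terms whose denominators are products $l_{i_1}l_{i_2}\cdots l_{i_k}-1$ and whose numerators carry the matching $a_{i_1}a_{i_2}\cdots a_{i_k}-1$. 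From this structure it is immediate that in $B_N(\sigma\lN,\sigma\aN)$ a factor $1/(l_1 l_2-1)$ can arise only when the indices $1$ and $2$ occupy the first two slots of $\sigma$ (in either order).

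Restricting the permutation sum to those $\sigma$ with $\{\sigma_1,\sigma_2\}=\{1,2\}$, I will multiply by $(l_1l_2-1)$ and set $l_1l_2=1$. The universal prefactor $(a_1a_2-1)/(l_1l_2-1)$ drops out immediately. The scattering weights $\prod_{j>k}f(l_{\sigma_j},l_{\sigma_k})$ then split into three groups: weights internal to the block $\{1,2\}$, which together with the $\{1,2\}$-piece of $B_N$ assemble the boundary-state-dependent prefactor $F(\lambda_1)$; cross weights between $\{1,2\}$ and $\{3,\ldots,N\}$, which at the residue point $l_2=1/l_1$ produce, for each $j\ge 3$, the product $f(l_j,l_1)\,f(l_j,1/l_1)$; and weights internal to $\{3,\ldots,N\}$, which reconstruct the permutation structure of an $(N-2)$-particle overlap.

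The key reorganisation is the identity
\begin{equation*}
f(l_j,l_1)\,f(l_j,1/l_1)=\bigl[f(\lambda_1-\lambda_j)\,f(-\lambda_1-\lambda_j)\bigr]\cdot\frac{f(l_j,l_1)\,f(l_j,1/l_1)}{f(l_1,l_j)\,f(1/l_1,l_j)},
\end{equation*}
where the first bracket matches the claimed prefactor $\prod_{j\ge 3}f(\lambda_1-\lambda_j)f(-\lambda_1-\lambda_j)$ and the second factor is exactly the ratio converting $a_j$ into $a^{\text{mod}}_j$ of \eqref{amod}. Summing over the remaining permutations of $\{3,\ldots,N\}$ then reassembles the residual expression into $\SSS^{\text{mod}}_{N-2}$, completing the claimed factorisation. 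The explicit form of $F(\lambda_1)$ is determined by the two-site initial state through the $\{1,2\}$-block contribution to $B_N$, and will be computed case by case in Section~\ref{sec:cases}.

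The main obstacle is the combinatorial bookkeeping: one must verify that no permutations with $\{\sigma_1,\sigma_2\}\ne\{1,2\}$ contribute to the simple pole at $l_1l_2=1$ (so that no unexpected cancellations or higher-order poles survive after summation), and that the redistribution of the $\prod_{j>k}f$ weights really yields the asymmetric combination above rather than some symmetrised variant. Both issues are combinatorial rather than analytic, and should follow by carefully tracking which pieces of the scattering product sit inside versus outside the block $\{1,2\}$ for each pole-contributing permutation.
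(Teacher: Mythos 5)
Your overall strategy---write $\SSS_N$ via \eqref{SBN}, obtain the explicit rational form of $B_N$, locate the apparent pole, and reassemble its residue into $F$ times a reduced overlap with modified $a$-variables---is the same as the paper's, and your treatment of the cross $f$-factors (splitting $f(l_j,l_1)f(l_j,1/l_1)$ into the prefactor $f(\lambda_1-\lambda_j)f(-\lambda_1-\lambda_j)$ times the ratio defining \eqref{amod}) is exactly what the paper does. However, there is a concrete error in the combinatorial step you yourself flag as the ``main obstacle'': you restrict to permutations with $\{\sigma_1,\sigma_2\}=\{1,2\}$, whereas the pole at $l_1l_2=1$ receives contributions from \emph{every} permutation placing particles $1$ and $2$ at \emph{any} adjacent pair of slots $(m,m+1)$, $m=1,\dots,N-1$. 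The denominators in \eqref{eq:finalBN} are of the form $\prod_{o=j+1}^{k}l_o-1$ and $\prod_{o=k}^{j}l_o-1$, i.e.\ products over \emph{consecutive blocks} of slots, and the block $\{m,m+1\}$ produces the factor $l_ml_{m+1}-1$ for every $m$, not only $m=1$. Already at $N=3$ the function $B_3$ has a pole at $l_2l_3=1$ whose residue is proportional to $a_2a_3-1$ and does not vanish off shell, so the permutations you discard genuinely contribute; this is an intrinsic property of $B_N$, not an artifact of one particular partial-fraction representation.

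The omission is not harmless, because the sum over block positions is precisely what reassembles the reduced overlap. The residue of $B_N$ at slots $(m,m+1)$ is proportional to the \emph{single} term $B_{N-2,m-1}$ of the reduced kinematical function (cf.\ \eqref{BNsing1}), and only after summing over $m=1,\dots,N-1$ does one recover $B_{N-2}=\sum_{j=0}^{N-2}B_{N-2,j}$ and hence $\SSS^{\text{mod}}_{N-2}$. Keeping only $m=1$ you would obtain the lone term $B_{N-2,0}$ in place of $B_{N-2}$, and the claimed factorisation \eqref{Srec} would fail. The fix is to carry out your residue computation at general block position $m$ (the two orderings of particles $1,2$ within the block still combine into the same $F(\lambda_1)$, by the symmetry $l_2=1/l_1$) and then sum over $m$ together with the permutations of $\{3,\dots,N\}$; this is exactly the route taken in the paper.
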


At present we do not have a general proof of this statement.
   However, we are able to rigorously prove it  in concrete cases. This leads to the
   determination of the function $F(\lambda)$. Examples for this will be shown in the next section.

Eq. \eqref{Srec} can be considered as a recursion relation for the
overlaps. It is rather similar to the recursion relations for scalar
products of Bethe states \cite{korepin-norms} or form factors
\cite{Smirnov-Book,sajat-nested} (see also \cite{yunfeng2,yunfeng3}). In fact, the modification rule above
is a rather straightforward generalization of a similar rule for
scalar products, first derived by Korepin in \cite{korepin-norms}.
However, the origin of the poles is different: in the previous
cases in the literature the singularities are the so-called
kinematical poles of the scalar products or form factors, which appear
when two rapidities in the bra and ket vectors approach each other. On
the other hand, here the two rapidities responsible for the pole are
within the same Bethe vector, and the apparent singularity is
associated with the pair structure. The role of such
  apparent poles was first recognized in
   \cite{zarembo-neel-cite1}, and has been used in \cite{non-int-ov} to study the large $L$
   behaviour of the overlaps.

It is important that if the original $l$- and $a$-variables satisfy the Bethe equations, then the
restricted set of $l$-variables is still on-shell with respect to the modified $a$-variables.

We now investigate the limit of the paired rapidities on the basis of the above
recursion relation. Let us therefore introduce the set of ``positive''
rapidities $\lanp$, such that the paired limit is taken as
\begin{equation}
  \label{lim2}
  \lambda_{2j-1}\to \lambda_j^+,\quad  \lambda_{2j}\to
  -\lambda_j^+,\qquad j=1\dots N.
\end{equation}
Similar notations are understood for the $l$- and $a$-variables.

For future use we introduce one more set of variables which will play
an important role. For each $j=1\dots N/2$ we define
\begin{equation}
 m_j= m(\lambda_j)=\left.-i\frac{d}{d\lambda}\log( a(\lambda))\right|_{\lambda=\lambda_j}.
  \end{equation}
In the original physical case $a_j=l_j^L=e^{ip(\lambda_j)L}$ we have
$m_j=p'(\lambda_j)L$, but generally we will treat the $m$-variables as independent.

Let us define the function $D(\lanp,\mpp^+)$ as the limit of the
function $S_N$ described by \eqref{lim2}. This is a symmetric function
under a simultaneous permutation of its variables. It is a rational function of $\lanp$ and it is
at most linear in each of the $m$-variables. The latter property follows from the fact that $S_N$ has only single poles
associated to each pair.

\begin{thm}
  The function $D$  satisfies the recursion
\begin{equation}
  \label{Drec}
  \begin{split}
& \frac{\partial  D(\lanp,\mpp^+|L)}{\partial m^+_1}=\frac{F(\lambda_1^+)}{p'(\lambda_1^+)}
\prod_{l=2}^{N/2}
\bar f(\lambda_1^+,\lambda_l^+)
\times D({\boldsymbol\lambda}^+_{N/2-1},\bm{m}^{+,mod}_{N/2-1}|L),
\end{split}
\end{equation}
where we defined
the modification rule for the  $m$-parameters
\begin{equation}
\label{mmod}
m_{mod}(\lambda)=m(\lambda)+\varphi^+(\lambda,\lambda_1^+).
\end{equation}
\end{thm}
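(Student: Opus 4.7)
The strategy is to combine the apparent-pole recursion for $\SSS_N$ in Proposition~\eqref{Srec} with the paired-limit prescription \eqref{lim2} defining $D$, and then read off the coefficient of $m_1^+$. First I would apply \eqref{Srec} to isolate the pole at $l_1 l_2 \to 1$, which factorizes $\SSS_N$ as the product of $\tfrac{a_1 a_2 - 1}{l_1 l_2 - 1}\,F(\lambda_1)\,\prod_{j=3}^N f(\lambda_1 - \lambda_j)\,f(-\lambda_1 - \lambda_j)$ with $\SSS^{\mathrm{mod}}_{N-2}$ evaluated at the modified $a$-variables \eqref{amod}. The remaining pairs $(\lambda_3,\lambda_4),\dots,(\lambda_{N-1},\lambda_N)$ are sent to their paired configurations according to \eqref{lim2}, which by the very definition of $D$ converts $\SSS^{\mathrm{mod}}_{N-2}$ into $D(\boldsymbol\lambda^+_{N/2-1},\bm{m}^{+,\mathrm{mod}}_{N/2-1}|L)$.

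Next I would evaluate the pair-limit of the singular prefactor. Writing $\lambda_1 = \lambda_1^+$ and $\lambda_2 = -\lambda_1^+ + \epsilon$, and using that $p$ is odd (so $p'$ is even) together with pair-compatibility $a(\lambda) a(-\lambda) = 1$ (so $m(\lambda)$ is even about zero), Taylor expansion gives
\[
l_1 l_2 - 1 = i p'(\lambda_1^+)\,\epsilon + O(\epsilon^2),\qquad a_1 a_2 - 1 = i m_1^+\,\epsilon + O(\epsilon^2),
\]
so the indeterminate ratio evaluates to $m_1^+/p'(\lambda_1^+)$. This recovers the factor $L$ observed in the physical setting ($m = p' L$) in the two-particle warm-up. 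The $f$-product collapses to $\prod_{l=2}^{N/2}\bar f(\lambda_1^+,\lambda_l^+)$ immediately from the definition of $\bar f$, once each pair $\{\lambda_{2l-1},\lambda_{2l}\}$ is placed at $\{+\lambda_l^+,-\lambda_l^+\}$.

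I would then convert the multiplicative $a$-modification \eqref{amod} into the additive $m$-modification \eqref{mmod} via a logarithmic derivative. Using $S(\lambda,\mu) = f(\lambda - \mu)/f(\mu - \lambda)$ and $\varphi = -i\partial_\lambda \log S$, and noting that the factor involving $1/l_1$ corresponds to the rapidity $-\lambda_1^+$ through $l(-\lambda) = 1/l(\lambda)$, I obtain
\[
m_j^{\mathrm{mod}} = m_j + \varphi(\lambda_j - \lambda_1^+) + \varphi(\lambda_j + \lambda_1^+) = m_j + \varphi^+(\lambda_j,\lambda_1^+),
\]
which matches \eqref{mmod} and makes the recursion self-consistent.

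Finally, the linearity of $D$ in each $m_j^+$ (asserted in the text on the grounds that $\SSS_N$ has only simple pair-poles) implies that the coefficient of $m_1^+$ in $D$ is exactly the pair-pole residue computed above; differentiation $\partial/\partial m_1^+$ extracts it and yields \eqref{Drec}. The main obstacle is justifying the linearity rigorously: one must show that the non-singular (in $l_1 l_2 - 1$) part of $\SSS_N$ produces no $m_1^+$-dependence after the pair limit. Heuristically this is clear, because the non-pole part evaluates at $l_1 l_2 = a_1 a_2 = 1$ with no sensitivity to the direction of approach (which is precisely what $m_1^+$ encodes); rigorously it should be controlled using Theorem~\ref{thm1}, which eliminates the bulk of the non-pole contributions on the on-shell locus, leaving only terms that are manifestly independent of $m_1^+$.
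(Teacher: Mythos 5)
Your proposal is correct and follows essentially the same route as the paper: the recursion is read off from the apparent-pole structure \eqref{Srec} together with Theorem~\ref{thm1} (which kills all non-pole contributions), and the $m$-modification rule \eqref{mmod} is obtained by taking $-i\,d/d\lambda\,\log$ of the $a$-modification \eqref{amod}. Your added details --- the Taylor expansion showing $\tfrac{a_1a_2-1}{l_1l_2-1}\to m_1^+/p'(\lambda_1^+)$, the collapse of the $f$-product into $\bar f$, and the explicit flagging of the linearity-in-$m_j^+$ assertion --- are all consistent with what the paper states more tersely in and around its proof.
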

\begin{proof}
This follows immediately from \eqref{Srec}, using also Theorem \ref{thm1}.
The modification rule for the $m$-variables follows from
\begin{equation}
   m_{mod}(\lambda)=-i\frac{d}{d\lambda}\log(a_{mod}(\lambda)),
\end{equation}
and using \eqref{amod} we get \eqref{mmod}.
\end{proof}

\begin{thm}
  The solution of the recursion \eqref{Drec} is
\begin{equation}
  \begin{split}
&D(\lanp,\mpp^+|L)= \prod_{j=1}^{N/2}\frac{F(\lambda_j^+)}{p'(\lambda_j^+)}
\prod_{1\le j<k\le N/2}\bar f(\lambda_j^+,\lambda_k^+)
\times \det G^+_{N/2}.
\end{split}
\end{equation}
\end{thm}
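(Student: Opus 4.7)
My plan is to prove the theorem by induction on $N/2$, checking in each step that the proposed factorized formula satisfies the recursion \eqref{Drec} established in the preceding theorem. Since $D$ is (by construction, via the pole structure of Proposition~1) at most linear in each $m_j^+$ and symmetric under simultaneous permutations of the pairs $(\lambda_j^+,m_j^+)$, the recursion together with a base case determines the $m$-polynomial structure of $D$.

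For the base case $N/2=1$ the Gaudin matrix $G^+_1$ has the single entry $m_1^+$ (the diagonal sum and the removed $\varphi^+(\lambda_1^+,\lambda_1^+)$ cancel), so the proposed formula is $F(\lambda_1^+)\,m_1^+/p'(\lambda_1^+)$; its $m_1^+$-derivative is $F(\lambda_1^+)/p'(\lambda_1^+)$, matching the right-hand side of \eqref{Drec} with the natural empty-product convention $D_0=1$.

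The inductive step is where the substantive computation sits. Because $m_1^+$ enters only in the $(1,1)$ diagonal entry of $G^+_{N/2}$, the determinant is linear in $m_1^+$, and cofactor expansion along the first row gives
\begin{equation*}
\frac{\partial\det G^+_{N/2}}{\partial m_1^+}=\det M,
\end{equation*}
where $M$ is the $(N/2-1)\times(N/2-1)$ minor obtained by deleting the first row and column. The central identity is $M=G^{+,\mathrm{mod}}_{N/2-1}$, the reduced Gaudin matrix for the $N/2-1$ remaining pairs with the shifted $m$-variables $m_j^{\mathrm{mod}}=m_j^++\varphi^+(\lambda_j^+,\lambda_1^+)$ prescribed by \eqref{mmod}. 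Off-diagonal entries of $M$ are unchanged so they match trivially; for the diagonal entries at positions $(j,j)$ with $j\ge 2$ one separates the $l=1$ contribution from the sum,
\begin{equation*}
G^+_{jj}=m_j^++\sum_{l=1}^{N/2}\varphi^+(\lambda_j^+,\lambda_l^+)-\varphi^+(\lambda_j^+,\lambda_j^+)=m_j^{\mathrm{mod}}+\sum_{l=2}^{N/2}\varphi^+(\lambda_j^+,\lambda_l^+)-\varphi^+(\lambda_j^+,\lambda_j^+),
\end{equation*}
which is exactly the diagonal of $G^{+,\mathrm{mod}}_{N/2-1}$. Feeding this into the derivative of the proposed formula and splitting $\prod_{1\le j<k\le N/2}\bar f(\lambda_j^+,\lambda_k^+)=\prod_{l=2}^{N/2}\bar f(\lambda_1^+,\lambda_l^+)\cdot\prod_{2\le j<k\le N/2}\bar f(\lambda_j^+,\lambda_k^+)$, the inductive hypothesis applied to $D(\bm{\lambda}^+_{N/2-1},\bm{m}^{+,\mathrm{mod}}_{N/2-1})$ makes the right-hand side of \eqref{Drec} reassemble into precisely $\partial_{m_1^+}$ of the proposed formula.

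Beyond the clean but delicate bookkeeping of the three product structures ($F/p'$, $\bar f$, and the Gaudin modifications), the main obstacle is closing the uniqueness step: the recursion fixes $D$ only up to a fully $\bm{m}^+$-independent piece, so one must separately match this piece on both sides. The proposed formula vanishes at $\bm{m}^+=0$ because the matrix $G^+$ then has vanishing row sums and hence vanishing determinant; the analogous vanishing $D|_{\bm{m}^+=0}=0$ would then complete the argument, and this should be traceable to the fact that every contribution to the paired limit of $\mathcal{S}_N$ ultimately carries at least one factor $m_k^+/p'(\lambda_k^+)$ from an apparent-pole residue, so that the $\bm{m}^+=0$ slice receives no surviving contribution.
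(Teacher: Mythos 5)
Your proposal is correct and follows essentially the same route as the paper, which is Korepin's argument: linearity in each $m_j^+$, the derivative recursion, symmetry, and vanishing at $\bm{m}^+=0$ uniquely pin down $D$, with the Gaudin determinant verified to satisfy all four properties. The only difference is presentational — you carry the prefactors $\prod F/p'$ and $\prod \bar f$ through an explicit induction and spell out the cofactor expansion identifying the minor with the modified Gaudin matrix, whereas the paper first strips them off to define $\tilde D$ and leaves that cofactor computation as ``easy to see''; your closing remark that the $\bm{m}^+$-independent piece must be matched separately is exactly the paper's third listed property of $\tilde D$.
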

\begin{proof}
Our proof follows the method of Korepin derived originally for the Gaudin determinant describing the
norm of the Bethe states \cite{korepin-norms}.

First we define a function $\tilde D(\lanp,\mpp^+)$ through
  \begin{equation}
  D(\lanp,\mpp^+|L)= \prod_{j=1}^{N/2}\frac{F(\lambda_j^+)}{p'(\lambda_j^+)}
  \prod_{1\le j<k\le N/2}\bar f(\lambda_j^+,\lambda_k^+)
  \tilde D(\lanp,\mpp^+|L).
  \end{equation}
  It follows from \eqref{Drec} that the linear parts in $m_j^+$ is given by
  \begin{equation}
  \label{Drec2}
  \begin{split}
 \frac{\partial \tilde D(\lanp,\mpp^+|L)}{\partial m^+_j}=
 \tilde D({\boldsymbol\lambda}^+_{N/2-1},\bm{m}^{+,mod}_{N/2-1}|L),
\end{split}
\end{equation}
where it is understood that $m_j^+$ is not included in the arguments on the r.h.s. and the
modification rule is given by \eqref{mmod}.

The function $\tilde D$ satisfies the following properties:
\begin{itemize}
\item It is symmetric in all its variables.
\item It is at most linear in each $m_j$.
\item It is zero if all $m_j=0$.
\item The linear piece in each $m_j$ is given by \eqref{Drec2}.
\end{itemize}
It is easy to see that the unique solution for this linear recursion with the
given properties is
\begin{equation}
  \tilde D=\det \tilde G_{N/2},\qquad
  \tilde G_{jk}=\delta_{jk}\left[m_j^++\sum_{l=1}^{N/2} \varphi^+(\lambda_j^+,\lambda_l^+)\right]
-\varphi^+(\lambda^+_j,\lambda^+_k).
\end{equation}
In the physical case we need to set $m_j^+=p'(\lambda_j^+)L$.
\end{proof}

The normalized squared overlap is obtained after dividing by the norm \eqref{pairednorm}. Using the
factorization \eqref{Gpm} we eventually obtain
\begin{equation}
\frac{\left|\skalarszorzat{\Psi}{\lanp}\right|^2}{\skalarszorzat{\lanp}{\lanp}}
=
\left[\prod_{j=1}^{N/2} \frac{|F(\lambda_j^+)|^2}{ f(2\lambda_j^+) f(-2\lambda_j^+)}\right]
\frac{\det G^+}{\det G^-}.
\end{equation}

The single particle overlap function is thus determined by the function $F(\lambda)$ which
determines the apparent singularity of the off-shell overlap:
\begin{equation}
  \label{ueredm}
  u(\lambda)=\frac{|F(\lambda_j^+)|^2}{ f(2\lambda_j^+) f(-2\lambda_j^+)}.
\end{equation}

With this we have finished outlining our general strategy. What remains to be proven is the
  fundamental singularity relation \eqref{Srec}, together with finding the function $F(\lambda)$ in
  specific cases. This is presented in the next section.

\section{Exact overlap formulae -- Concrete cases}

\label{sec:cases}

\subsection{The state $\ket{X_F}$ in the Heisenberg chains}

Here we consider the state $\ket{X_F}$ defined in \eqref{XFdef}. Now the overlap can be written as
\eqref{SBN} with the $B$-function given by
\begin{equation}
  \label{BN1}
  B_N(l_1,l_2,\dots,l_N|L)=\sum_{x_1=0}^{L-N}\sum_{x_2=x_1+1}^{L-N+1}\dots \sum_{x_N=x_{N-1}+1}^{L-1}\
l_1^{x_1}l_2^{x_2}\dots l_N^{x_N}.
\end{equation}
Note that the positions of the particles go from $0$ to $L-1$.

Regarding the first function we get
\begin{equation}
  \label{B1}
  B_1(l_1|L)=\frac{l_1^L-1}{l_1-1}.
\end{equation}
The second function is determined by the simple difference equation
\begin{equation}
  \label{B2a}
  B_2(L)-B_2(L-1)=l_2^{L-1}B_1(L-1),
\end{equation}
which can be derived from the definition \eqref{BN1}. In fact, we have the following general recursion relation
\begin{align}
\label{eq:BBrecursion}
B_N(L)-B_N(L-1)=l_N^{L-1}B_{N-1}(L-1).
\end{align}
The proof of this recursion relation is as follows. Consider the chain of length $L$ with $N$
particles, the corresponding quantity is $B_N(L)$. Note that from (\ref{BN1}), $B_N(L)$ can be
written as the sum of two parts, corresponding to whether the last site is occupied or not. When the
last site is empty, all particles sit in the first $L-1$ sites, the contribution is given by
$B_N(L-1)$. When the last site is occupied, since the particle is ordered and the site can be
occupied by at most one particle, it must be the particle with $l_N$ which occupies the last
site. The contribution from this particle is $l_N^{L-1}$. The rest $N-1$ particles are distributed
in the first $L-1$ sites whose contribution is given by $B_{N-1}(L-1)$. This implies
(\ref{eq:BBrecursion}). This recursion relation is very helpful for deriving a closed form formula for
$B_N(L)$, as we will show below.

Substituting \eqref{B1} into \eqref{B2a} we get
\begin{equation}
  \label{B2b}
  B_2(L)-B_2(L-1)=l_2^{L-1}\frac{l_1^{L-1}-1}{l_1-1}=\frac{(l_1l_2)^{L-1}-l_2^{L-1}}{l_1-1}.
\end{equation}
The solution to this recursion is
\begin{equation}
  B_2(L)=\frac{(l_1l_2)^{L}}{(l_1l_2-1)(l_1-1)}-\frac{l_2^L-1}{(l_2-1)(l_1-1)}+C,
\end{equation}
where $C$ is an $L$ independent integration constant. It can be fixed easily by computing $B_2(L)$
at $L=2$, which is simply
\begin{equation}
  B_2(2)=l_2.
\end{equation}
This fixes $C$ and we get
\begin{equation}
  B_2(L)=\frac{(l_1l_2)^{L}}{(l_1l_2-1)(l_1-1)}-\frac{l_2^L}{(l_2-1)(l_1-1)}+\frac{l_2}{(l_2-1)(l_1l_2-1)}.
\end{equation}
We can continue along these lines for $N=3$. The recursion relation reads
\begin{equation}
  \label{B3a}
  B_3(L)-B_3(L-1)=l_3^{L-1}B_2(L-1).
\end{equation}
Solving this with the appropriate initial condition $B_3(3)=l_2l_3^2$ we get
\begin{equation}
  \begin{split}
  B_3(L)=
 & \frac{(l_1l_2l_3)^{L}}{(l_1l_2l_3-1)(l_1l_2-1)(l_1-1)}-\frac{(l_2l_3)^{L}}{(l_2l_3-1)(l_2-1)(l_1-1)}+\\
& + \frac{l_3^{L} l_2}{(l_3-1)(l_2-1)(l_1l_2-1)}-\frac{l_2l_3^2}{(l_1l_2l_3-1)(l_2l_3-1)(l_3-1)}.
  \end{split}
 \end{equation}
 Continuing this for the general $N$-particle case we get
 \begin{equation}
   \label{BN1sol}
   \begin{split}
     B_N(L)=
     \sum_{j=0}^N \frac{(-1)^j\left(\prod_{k=j+1}^N l_k\right)^L \prod_{k=2}^j l_k^{k-1}  }
{\prod_{k=j+1}^N \left(\prod_{o=j+1}^k l_o  -1\right)\times
\prod_{k=1}^j \left(\prod_{o=k}^j l_o-1\right)},
   \end{split}
 \end{equation}
 where we also used the general initial condition
 \begin{equation}
   B_N(N)=l_2l_3^2\dots l_N^{N-1}.
 \end{equation}
Substituting the $a$-variables leads to
 \begin{equation}
 \label{eq:finalBN}
   \begin{split}
  B_N(\{a_j\},\{l_j\},L)=
     \sum_{j=0}^N \frac{(-1)^j\prod_{k=j+1}^N a_k \prod_{k=2}^j l_k^{k-1}  }
{\prod_{k=j+1}^N \left(\prod_{o=j+1}^k l_o  -1\right)\times
\prod_{k=1}^j \left(\prod_{o=k}^j l_o-1\right)}.
   \end{split}
 \end{equation}
This will be the ingredient function for the overlaps, which will have
a summation over permutations, and multiplication with factors related
to the $S$-matrix. We emphasize that the $L$ dependence is now all hidden in
$\{a_j\}$ in the final expression (\ref{eq:finalBN}) and no longer appears in the limits of the
summations. This manipulation makes it possible to impose the Bethe equations. The formula \eqref{eq:finalBN} was first computed in  \cite{zarembo-neel-cite1}.

\subsubsection{Determining the singular piece}

Now we intend to compute the residue of the pole $1/(l_1l_2-1)$ of $\SSS_N$. The overlap itself is
given by $N!$ terms, but from the actual form of the $B$-function it can be seen that the desired
pole will only be present in those permutations that put the particles 1 and 2 to neighboring
positions. This is equivalent to the statement that $B_N$ has a pole of the form $1/(l_jl_k-1)$ if
$|j-k|=1$.

Let us therefore  pick some number $m$ and  investigate the residue
\begin{equation}
  \Res_{l_ml_{m+1}\to 1} B_N(L).
\end{equation}

To this order we write the $B$-function as
 \begin{equation}
   B_N(L)=\sum_{j=0}^N B_{N,j}(L)
 \end{equation}
 with
 \begin{equation}
   B_{N,j}(L)= \frac{(-1)^j\prod_{k=j+1}^N a_k \prod_{k=2}^j l_k^{k-1}  }
{\prod_{k=j+1}^N \left(\prod_{o=j+1}^k l_o  -1\right)\times
\prod_{k=1}^j \left(\prod_{o=k}^j l_o-1\right)}.
 \end{equation}

Let us look at the poles of the type $1/(l_ml_{m+1}-1)$.
There are two singular pieces given by $B_{N,m-1}$ and $B_{N,m+1}$,
and their sum reads
\begin{equation}
  \begin{split}
    & \frac{(-1)^{m-1}\prod_{k=m}^N a_k \prod_{k=2}^{m-1} l_k^{k-1}  }
{\prod_{k=m}^N \left(\prod_{o=m}^k l_o  -1\right)\times
  \prod_{k=1}^{m-1} \left(\prod_{o=k}^{m-1} l_o-1\right)}
+\\
& \frac{(-1)^{m-1}\prod_{k=m+2}^N a_k \prod_{k=2}^{m+1} l_k^{k-1}  }
{\prod_{k=m+2}^N \left(\prod_{o=m+2}^k l_o  -1\right)\times
  \prod_{k=1}^{m+1} \left(\prod_{o=k}^{m+1} l_o-1\right)}=\\
   & \frac{(-1)^{m-1}a_ma_{m+1}\prod_{k=m+2}^N a_k \prod_{k=2}^{m-1} l_k^{k-1}  }
   {
 \left( l_{m}  -1\right) \left(l_ml_{m+1}  -1\right)
     \prod_{k=m+2}^N \left(\prod_{o=m}^k l_o  -1\right)\times
  \prod_{k=1}^{m-1} \left(\prod_{o=k}^{m-1} l_o-1\right)}
+\\
& \frac{(-1)^{m-1}  (l_ml_{m+1})^m l_{m+1} \prod_{k=m+2}^N a_k \prod_{k=2}^{m-1} l_k^{k-1}}
{\prod_{k=m+2}^N \left(\prod_{o=m+2}^k l_o  -1\right)\times
 \left( l_ml_{m+1}-1\right) \left(l_{m+1}-1\right)
  \prod_{k=1}^{m-1} \left(\prod_{o=k}^{m+1} l_o-1\right)}.
\end{split}
\end{equation}
So altogether the singular piece in $B_N$ is
\begin{equation}
    B_N(\{a_j\},\{l_j\},L)\sim \frac{a_ma_{m+1}-1}{l_ml_{m+1}-1}\times  \frac{1}{l_m-1}
 \frac{(-1)^{m-1}\prod_{k=m+2}^N a_k \prod_{k=2}^{m-1} l_k^{k-1}  }
   {  \prod_{k=m+2}^N \left(\prod_{o=m}^k l_o  -1\right)\times
     \prod_{k=1}^{m-1} \left(\prod_{o=k}^{m-1} l_o-1\right)},
     \end{equation}
     which can be written as
     \begin{equation}
       \label{BNsing1}
         B_N(\{a_j\},\{l_j\},L)\sim \frac{a_ma_{m+1}-1}{l_ml_{m+1}-1} \frac{1}{l_m-1}
B_{N-2,m-1}(\{1,2,\dots,\cancel{m},\cancel{m-1},\dots,N\},L).
\end{equation}
In order to determine the singularity of $\SSS_N$ we need to sum over all permutations that put the
particles 1 and 2 to neighboring positions and multiply with the $f$-functions corresponding to the
permutations. It is important that once we pick positions $m$ and $m+1$ there are still two
possibilities corresponding to the relative ordering of particles 1 and 2. These two terms will have
many common factors for each $m$, and the sum of those factors which are different is
\begin{equation}
F(\lambda_m,\lambda_{m+1})=\frac{f(\lambda_{m+1}-\lambda_{m})}{l_m-1}+
\frac{f(\lambda_{m}-\lambda_{m+1})}{l_{m+1}-1}.
\end{equation}
Using the symmetry we can introduce
\begin{equation}
  \label{F1}
  F(\lambda)=\frac{f(-2\lambda)}{l(\lambda)-1}+\frac{f(2\lambda)}{l(-\lambda)-1}.
\end{equation}
The remaining additional $f$-factors for these terms will be
\begin{equation}
  \prod_{j=1}^{m-1} f(\lambda_m-\lambda_j) f(\lambda_{m+1}-\lambda_{j})
  \prod_{j=m+2}^N f(\lambda_j-\lambda_m) f(\lambda_{j}-\lambda_{m+1}).
\end{equation}
This can be written in the form
\begin{equation}
    \mathop{\prod_{j=1}^{N}}_{j\ne m,m+1} f(\lambda_m-\lambda_j) f(\lambda_{m+1}-\lambda_{j})
\times  \prod_{j=m+2}^N
\frac{f(\lambda_j-\lambda_m)}{f(\lambda_m-\lambda_j)}
\frac{f(\lambda_{j}-\lambda_{m+1})}{f(\lambda_{m+1}-\lambda_{j})}.
\end{equation}
Note that ratios of $f$-functions appear such that they multiply the $a$-variables in a well defined way,
namely the
residue can be formulated by introducing the modification rule
\begin{equation}
  a^{\text{mod}}_j=
\frac{f(\lambda_j-\lambda_m)}{f(\lambda_m-\lambda_j)}
\frac{f(\lambda_j-\lambda_{m+1})}{f(\lambda_{m+1}-\lambda_j)}
  a_j.
\end{equation}
It is important that if the original set $\lan$ satisfies the original Bethe equations, then the set
$\lan\setminus \{\lambda_m,\lambda_{m+1}\}$ satisfies the Bethe equations with the modified $a$-parameters.

Summing over all remaining permutations, altogether the singularity of the overlap at $l_1l_2=1$ is
\begin{equation}
  \SSS_N(L)\sim
 \frac{a_1a_{2}-1}{l_1l_{2}-1}
  F(\lambda_1)
   \prod_{j=3}^{N} f(\lambda_1-\lambda_j) f(-\lambda_1-\lambda_{j})
\SSS^{\text{mod}}_{N-2}(\cancel{1},\cancel{2},L)
 \end{equation}
with $F(\lambda)$ given by \eqref{F1}.

In the XXX model the functions $l(\lambda)$ and $f(\lambda)$ are given by
\eqref{XXXfunct}. Substituting them into \eqref{F1} we obtain $F(\lambda)=0$. This means that all
overlaps with $N\ne 0$ are zero. This is in agreement with the fact that any ferromagnetic state is
an eigenstate of the Hamiltonian, which lies in the $SU(2)$ multiplet of the reference state. The
overlaps of these states with any Bethe states are identically zero.

In the XXZ model the functions $l(\lambda)$ and $f(\lambda)$ are given by
\eqref{XXZfunct}. This leads to
\begin{equation}
  F(\lambda)=\frac{\sin(\lambda+i\eta/2)\sin(\lambda-i\eta/2)}{\cos^2(\lambda)}.
\end{equation}

Computing the overlap pre-factor as given by \eqref{ueredm} we get
\begin{equation}
  u(\lambda)= \frac{F^2(\lambda)}{ f(2\lambda) f(-2\lambda)}=
\tan^2(\lambda)\tan(\lambda+i\eta/2)\tan(\lambda-i\eta/2).
\end{equation}
This coincides with the result obtained in \cite{sajat-minden-overlaps}, see eq. (3.18) there.

\subsection{N\'eel and generalized N\'eel states in the Heisenberg chains}

Let us now consider the boundary state
\begin{equation}
  \ket{N_\alpha}=\otimes_{j=1}^{L/2} \left(
    \begin{pmatrix}
      1 \\ \alpha
    \end{pmatrix}
    \otimes
       \begin{pmatrix}
      1 \\ 0
    \end{pmatrix}
  \right).
\end{equation}
This state satisfies the requirement $\psi_{11}=1$, and for non-zero $\alpha$ it has finite
overlaps with all parity-invariant Bethe states. In the $\alpha\to \infty$ limit it turns
into the N\'eel state after re-scaling. It is our intention here to derive the overlaps, and also to
show that in the $\alpha\to\infty$ limit only the states with $N=L/2$ can have non-zero overlaps.

Now particles can only occupy every odd site. As a result, the computation of the kinematical sum is
almost the same as in the previous case, except that now the propagation of particles is restricted
to an even number of hoppings. As an effect, the kinematical $B_N$-function is formally the same as
before, except  for the replacement $L\to L/2$ and $l_j\to l_j^2$ for each $j=1,\dots,N$. Also, the
overlap receives an overall factor of
$\alpha^N$. As an effect of these changes, instead of the direct pole of the type $1/(l_1l_2-1)$ we obtain poles
\begin{equation}
  \frac{1}{l_1^2l_2^2-1}= \frac{1}{l_1l_2-1}   \frac{1}{l_1l_2+1}.
\end{equation}
It can be seen that the residue at $l_1l_2=1$ gets an extra factor of $1/2$. Putting these
modifications together we can extract the $F$-function as
\begin{equation}
  F(\lambda)=\frac{\alpha^2}{2}\left[\frac{f(-2\lambda)}{l^2(\lambda)-1}+\frac{f(2\lambda)}{l^2(-\lambda)-1}\right].
\end{equation}
In the XXX model the substitution of  \eqref{XXXfunct} leads to
\begin{equation}
  F(\lambda)=\alpha^2\frac{u^2+1/4}{4u^2}.
\end{equation}
Altogether the one-particle overlap function with the un-normalized state becomes
\begin{equation}
  u(\lambda)= \frac{F^2(\lambda)}{ f(2\lambda) f(-2\lambda)}=
 \alpha^4 \frac{u^2+1/4}{16u^2}.
\end{equation}
In order to obtain the overlaps with the N\'eel state we need to perform the limit $\alpha\to
\infty$ after re-scaling by $\alpha^L$. It follows immediately that only the overlaps with $N=L/2$
survive, as expected.

The resulting overlap formula agrees with the earlier results
\cite{Caux-Neel-overlap1,zarembo-neel-cite1,zarembo-neel-cite2}.

\subsection{Generalized N\'eel state in the $SL(2,\mathbb{R})$ chain}

Let us first consider the overlap with the generalization of $\ket{X_F}$, namely a one-site invariant
state
\begin{equation}
  \ket{X_F}=\otimes_{j=1}^L
  \begin{pmatrix}
    1 \\ 1\\ \vdots
  \end{pmatrix}=\otimes_{j=1}^L \big(e^{S_+}|0\rangle\big)=e^{\mathcal{S}_+}|\Omega\rangle
\end{equation}
This state was already introduced in Section \ref{sec:gencriteria} as the special case of the
generalized N\'eel state $\ket{\text{N\'eel}_{1}}$.
This vector belongs to the multiplet of the reference state, so the overlaps with the Bethe states will
vanish, in accordance with relation \eqref{eq:weighedNeel} for $\kappa=1$. However, it is useful to compute the associated kinematical functions, which can be used
later for general $\kappa$.

The overlap with $\ket{X_F}$ is given by the same form as in \eqref{SBN} but now the kinematical sum is
\begin{equation}
  \label{BN2}
  B_N(l_1,l_2,\dots,l_N|L)=\sum_{x_1=0}^{L-1}\sum_{x_2=x_1}^{L-1}\dots \sum_{x_N=x_{N-1}}^{L-1}\
l_1^{x_1}l_2^{x_2}\dots l_N^{x_N}.
\end{equation}
The difference from the compact XXZ model is that now an arbitrary number of particles
can occupy the same site, and this changes the summation limits.

In the one-particle case we get the same formula as before:
\begin{equation}
  \label{B1a}
  B_1(l_1|L)=\frac{l_1^L-1}{l_1-1}.
\end{equation}
For $N=2$ the relevant recursion relation is
\begin{equation}
    B_2(L)-B_2(L-1)=l_1\frac{(l_1l_2)^{L-1}}{l_1-1}- \frac{l_2^{L-1}}{l_1-1}.
\end{equation}
The initial condition is $B_2(1)=1$. The solution satisfying this condition is
\begin{equation}
  B_2(L)=l_1\frac{(l_1l_2)^{L}}{(l_1l_2-1)(l_1-1)}- \frac{l_2^{L}}{(l_1-1)(l_2-1)}+\frac{1}{(l_1l_2-1)(l_2-1)}.
\end{equation}
Regarding the general multi-particle case the difference equation is
\begin{equation}
  B_N(L)-B_N(L-1)=l_N^{L-1}B_{N-1}(L)
\end{equation}
with the initial condition
\begin{equation}
 B_N(1)=1.
\end{equation}
The general solution is
\begin{equation}
  \label{BN2sol}
   \begin{split}
     B_N(L)=
     \sum_{j=0}^N \frac{(-1)^j \prod_{k=j+1}^N l_k^L  l_k^{N-k}  }
{\prod_{k=j+1}^N \left(\prod_{o=j+1}^k l_o  -1\right)\times
\prod_{k=1}^j \left(\prod_{o=k}^j l_o-1\right)}.
   \end{split}
 \end{equation}

The analysis of the singularity of $B_N$ can be performed in a similar way as before. We get the relation
\begin{equation}
         B_N(\{a_j\},\{l_j\},L)\sim \frac{a_ma_{m+1}-1}{l_ml_{m+1}-1} \frac{l_m}{l_m-1}
B_{N-2,m-1}(\{1,2,\dots,\cancel{m},\cancel{m-1},\dots,N\},L).
\end{equation}
The only change compared to \eqref{BNsing1} is the appearance of an extra factor of
$l_m$. Completing the computation we obtain the $F$-function as
\begin{equation}
  \label{Fsl2a}
  \begin{split}
  F(\lambda)&=\frac{f(-2\lambda)l(\lambda)}{l(\lambda)-1}+\frac{f(2\lambda)l(-\lambda)}{l(-\lambda)-1}=\\
&=\frac{f(-2\lambda)}{1-l(-\lambda)}+\frac{f(2\lambda)}{1-l(\lambda)}.
\end{split}
\end{equation}
In the $SL(2,\mathbb{R})$ case the corresponding functions are given by \eqref{eq:noncompactfunct}.
Substituting them into \eqref{Fsl2a} we get $F(\lambda)=0$ as expected.

Now we consider the generalized N\'eel state $\ket{\text{N\'eel}_0}$. The difference is once again that we
need to perform the change $l_j\to l_j^2$. This leads eventually to
\begin{equation}
F(\lambda)=\frac{1}{2}\left[  \frac{f(-2\lambda)}{1-l^2(-\lambda)}+\frac{f(2\lambda)}{1-l^2(\lambda)}\right].
\end{equation}
and the overlap function becomes
\begin{equation}
  u(\lambda)= \frac{F^2(\lambda)}{ f(2\lambda) f(-2\lambda)}.
\end{equation}
Substituting \eqref{eq:noncompactfunct} we get the same result as in the XXX case:
\begin{equation}
  u(\lambda)= \frac{F^2(\lambda)}{ f(2\lambda) f(-2\lambda)}=
  \frac{u^2+1/4}{16u^2}.
\end{equation}
This result agrees with the findings of \cite{yunfeng-structure-g}.

\section{Conclusions and discussions}
\label{sec:general}

We presented a new method to derive and prove exact overlap formulae in integrable spin chains. The
method is based on the coordinate Bethe Ansatz representation of the wave functions. The key
identity is the singularity property \eqref{Srec} of the off-shell overlaps. This is a new result of
the present work, which leads to the proof of exact overlaps in a number of cases presented
in Section \ref{sec:cases}.

It is important to compare the present method to the previous derivation of
\cite{Caux-Neel-overlap1}, which was the only available rigorous proof before our work. The paper
\cite{Caux-Neel-overlap1} derived the factorized overlaps starting from an exact off-shell
determinant formula, based on \cite{sajat-neel,sajat-karol} and going back to the work of Tsushiya \cite{tsushiya}. This method only works for the boundary states corresponding to the so-called
diagonal $K$-matrices. On the other hand, our method is applicable even for off-diagonal
$K$-matrices, when there is no determinant formula for the off-shell overlaps.

Nevertheless our method has its drawbacks and limitations. First of all, we were not able to provide
a general proof of the relation \eqref{Srec}, we only proved it on a case by case basis. Clearly, it
would be important to find the deeper reason why such a relation holds. Second, our method relies
heavily on the coordinate Bethe Ansatz, and therefore it can not be applied in situations where this
method fails, for example in models with $U(1)$-symmetry breaking. It would be desirable to study
the same problems in more general frameworks such as the Separation of Variables (SoV) method. Such
a future study might also be helpful for studying overlaps in models solvable by the nested Bethe ansatz, where the present
method seems rather cumbersome. We plan to return to this question in future work.

Regarding the interpretation of
the factorized overlap formulae let us mention once more the work
\cite{sajat-marci-boundary}, which treated excited state $g$-functions in integrable QFT. These
objects are completely analogous to the finite volume overlaps in the spin chain. In
\cite{sajat-marci-boundary} the known structure \eqref{ovgeneral} of the overlaps was derived, even
before the analogous results for spin chains appeared in the context of the quantum quench. The
work \cite{sajat-marci-boundary} compared  the computation of certain time-dependent
one-point functions in finite and infinite volumes, and derived the correct ratio of determinants
using only the density of states for the restricted, parity symmetric configurations. Therefore,
\cite{sajat-marci-boundary} provides a rather natural interpretation for the overlaps, much like the
parallel observation that the original Gaudin-determinant describes both the density of states and
the norm of the Bethe states (see also \cite{XXZ-gaudin-norms}). It would be desirable to work out
the arguments of \cite{sajat-marci-boundary} also in the spin chain situation, and to make them
precise. This would complete the understanding of the factorized overlap formulae.

Finally we note  that the our method can be applied directly to the
Lieb-Liniger model to derive the overlaps with the BEC state, originally found in
\cite{caux-stb-LL-BEC-quench} and proven by a scaling limit of the spin chain in \cite{Brockmann-BEC}.

\subsection*{Acknowledgments}

The work of B.P.
was partially supported by the National Research Development and Innovation Office (NKFIH) of Hungary under
grant  K-16 No.~119204,
by the
J\'anos Bolyai Research Scholarship of the Hungarian
Academy of Sciences and the
\'UNKP-19-4 New National Excellence Program of the Ministry for Innovation and Technology.

\appendix
\section{Formula for coherent states}
In this appendix, we collect some formula for coherent states which are useful in the main text. From the $SL(2,\mathbb{R})$ algebra and using the formula
\begin{align}
e^A\,B\,e^{-A}=B+[A,B]+\frac{1}{2!}[A,[A,B]]+\cdots
\end{align}
we can prove the following results
\begin{align}
e^{-\alpha\,S_+}\,S_-\,e^{\alpha\,S_+}=&\,S_-+2\alpha S_0+\alpha^2 S_+,\\\nonumber
e^{-\alpha\,S_+}\,S_0\,e^{\alpha\,S_+}=&\,S_0+\alpha\,S_+.
\end{align}
Using these relations, we can prove the action of generators on the coherent state
\begin{align}
\label{eq:SSalpha}
S_0|\alpha\rangle=\frac{1}{2}|\alpha\rangle+\alpha\,S_+|\alpha\rangle,\qquad S_-|\alpha\rangle=\alpha|\alpha\rangle+\alpha^2\,S_+|\alpha\rangle.
\end{align}

\providecommand{\href}[2]{#2}\begingroup\raggedright\endgroup


\begin{thebibliography}{10}

\bibitem{yunfeng-structure-g}
Y.~{Jiang}, S.~{Komatsu}, and E.~{Vescovi}, ``{Structure Constants in
  $\mathcal{N}=4$ SYM at Finite Coupling as Worldsheet $g$-Function},'' {\em
  arXiv e-prints} (2019)  , \href{http://arxiv.org/abs/1906.07733}{{\tt
  arXiv:1906.07733 [hep-th]}}.

\bibitem{Jiang:2019zig}
Y.~Jiang, S.~Komatsu, and E.~Vescovi, ``{Exact Three-Point Functions of
  Determinant Operators in Planar $N=4$ Supersymmetric Yang-Mills Theory},''
  \href{http://dx.doi.org/10.1103/PhysRevLett.123.191601}{{\em Phys. Rev.
  Lett.} {\bf 123} (2019) no.~19, 191601},
  \href{http://arxiv.org/abs/1907.11242}{{\tt arXiv:1907.11242 [hep-th]}}.

\bibitem{Bajnok-unpubl}
Z.~Bajnok, J.~L. Jacobsen, Y.~Jiang, R.~I. Nepomechie, and Y.~Zhang,
  ``{Cylinder partition function of the 6-vertex model from algebraic
  geometry},'' \href{http://arxiv.org/abs/2002.09019}{{\tt arXiv:2002.09019
  [hep-th]}}.

\bibitem{zarembo-neel-cite1}
M.~{de Leeuw}, C.~{Kristjansen}, and K.~{Zarembo}, ``{One-point functions in
  defect CFT and integrability},''
  \href{http://dx.doi.org/10.1007/JHEP08(2015)098}{{\em Journal of High Energy
  Physics} {\bf 8} (2015)  98}, \href{http://arxiv.org/abs/1506.06958}{{\tt
  arXiv:1506.06958 [hep-th]}}.

\bibitem{zarembo-neel-cite3}
I.~{Buhl-Mortensen}, M.~{de Leeuw}, C.~{Kristjansen}, and K.~{Zarembo},
  ``{One-point Functions in AdS/dCFT from Matrix Product States},''
  \href{http://dx.doi.org/10.1007/JHEP02(2016)052}{{\em Journal of High Energy
  Physics} {\bf 2} (2016)  52}, \href{http://arxiv.org/abs/1512.02532}{{\tt
  arXiv:1512.02532 [hep-th]}}.

\bibitem{ADSMPS2}
M.~{de Leeuw}, C.~{Kristjansen}, and S.~{Mori}, ``{AdS/dCFT one-point functions
  of the SU(3) sector},''
  \href{http://dx.doi.org/10.1016/j.physletb.2016.10.044}{{\em Physics Letters
  B} {\bf 763} (2016)  197--202}, \href{http://arxiv.org/abs/1607.03123}{{\tt
  arXiv:1607.03123 [hep-th]}}.

\bibitem{Caux-Neel-overlap1}
M.~{Brockmann}, J.~{De Nardis}, B.~{Wouters}, and J.-S. {Caux}, ``{A
  Gaudin-like determinant for overlaps of N{\'e}el and XXZ Bethe states},''
  \href{http://dx.doi.org/10.1088/1751-8113/47/14/145003}{{\em Journal of
  Physics A Mathematical General} {\bf 47} (2014) no.~14, 145003},
  \href{http://arxiv.org/abs/1401.2877}{{\tt arXiv:1401.2877
  [cond-mat.stat-mech]}}.

\bibitem{sajat-neel}
B.~{Pozsgay}, ``{Overlaps between eigenstates of the XXZ spin-1/2 chain and a
  class of simple product states},''
  \href{http://dx.doi.org/10.1088/1742-5468/2014/06/P06011}{{\em J. Stat.
  Mech.} {\bf 2014} (2014) no.~6, P06011},
  \href{http://arxiv.org/abs/1309.4593}{{\tt arXiv:1309.4593
  [cond-mat.stat-mech]}}.

\bibitem{sajat-karol}
K.~K. Kozlowski and B.~Pozsgay, ``Surface free energy of the open XXZ spin-1/2
  chain,'' \href{http://dx.doi.org/10.1088/1742-5468/2012/05/P05021}{{\em J.
  Stat. Mech.} {\bf 2012} (2012) no.~05, 21}.

\bibitem{sajat-integrable-quenches}
L.~Piroli, B.~Pozsgay, and E.~Vernier, ``What is an integrable quench?,''
  \href{http://dx.doi.org/10.1016/j.nuclphysb.2017.10.012}{{\em Nuclear Physics
  B} {\bf 925} (2017) no.~Supplement C, 362 -- 402},
  \href{http://arxiv.org/abs/1709.04796}{{\tt arXiv:1709.04796
  [cond-mat.stat-mech]}}.

\bibitem{sajat-minden-overlaps}
B.~{Pozsgay}, ``{Overlaps with arbitrary two-site states in the XXZ spin
  chain},'' \href{http://dx.doi.org/10.1088/1742-5468/aabbe1}{{\em Journal of
  Statistical Mechanics: Theory and Experiment} {\bf 2018} (2018) no.~5,
  053103}, \href{http://arxiv.org/abs/1801.03838}{{\tt arXiv:1801.03838
  [cond-mat.stat-mech]}}.

\bibitem{sajat-twisted-yangian}
M.~De~Leeuw, T.~Gombor, C.~Kristjansen, G.~Linardopoulos, and B.~Pozsgay,
  ``{Spin Chain Overlaps and the Twisted Yangian},''
  \href{http://dx.doi.org/10.1007/JHEP01(2020)176}{{\em JHEP} {\bf 01} (2020)
  176}, \href{http://arxiv.org/abs/1912.09338}{{\tt arXiv:1912.09338
  [hep-th]}}.

\bibitem{sajat-mps}
B.~{Pozsgay}, L.~{Piroli}, and E.~{Vernier}, ``{Integrable Matrix Product
  States from boundary integrability},''
  \href{http://dx.doi.org/10.21468/SciPostPhys.6.5.062}{{\em SciPost Physics}
  {\bf 6} (2019) no.~5, 062}, \href{http://arxiv.org/abs/1812.11094}{{\tt
  arXiv:1812.11094 [cond-mat.stat-mech]}}.

\bibitem{ghoshal-zamolodchikov}
S.~Ghoshal and {A. B. Zamolodchikov}, ``{Boundary S matrix and boundary state
  in two-dimensional integrable quantum field theory},''
  \href{http://dx.doi.org/10.1142/S0217751X94001552}{{\em Int. J. Mod. Phys.}
  {\bf A9} (1994)  3841--3886},
\href{http://arxiv.org/abs/hep-th/9306002}{{\tt arXiv:hep-th/9306002}}.

\bibitem{sajat-marci-boundary}
M.~Kormos and B.~Pozsgay, ``{One-point functions in massive integrable QFT with
  boundaries},'' \href{http://dx.doi.org/10.1007/JHEP04(2010)112}{{\em JHEP}
  {\bf 04} (2010)  112},
\href{http://arxiv.org/abs/1002.2783}{{\tt arXiv:1002.2783 [hep-th]}}.

\bibitem{tsushiya}
O.~Tsuchiya, ``Determinant formula for the six-vertex model with reflecting
  end,'' \href{http://dx.doi.org/10.1063/1.532606}{{\em Journal of Mathematical
  Physics} {\bf 39} (1998) no.~11, 5946--5951},
  \href{http://arxiv.org/abs/9804010}{{\tt arXiv:9804010 [solv-int]}}.

\bibitem{korepin-norms}
V.~E. Korepin, ``Calculation of norms of Bethe wave functions,''
  \href{http://dx.doi.org/10.1007/BF01212176}{{\em Comm. Math. Phys.} {\bf 86}
  (1982)  391}.

\bibitem{Braun:1998id}
V.~M. Braun, S.~E. Derkachov, and A.~N. Manashov, ``{Integrability of three
  particle evolution equations in QCD},''
  \href{http://dx.doi.org/10.1103/PhysRevLett.81.2020}{{\em Phys. Rev. Lett.}
  {\bf 81} (1998)  2020--2023}, \href{http://arxiv.org/abs/hep-ph/9805225}{{\tt
  arXiv:hep-ph/9805225 [hep-ph]}}.

\bibitem{Kirch-Manashov-sl2-1}
M.~Kirch and A.~N. Manashov, ``{Noncompact SL(2,R) spin chain},''
  \href{http://dx.doi.org/10.1088/1126-6708/2004/06/035}{{\em JHEP} {\bf 06}
  (2004)  035}, \href{http://arxiv.org/abs/hep-th/0405030}{{\tt
  arXiv:hep-th/0405030 [hep-th]}}.

\bibitem{Lipatov-1}
L.~N. Lipatov, ``{Asymptotic behavior of multicolor QCD at high energies in
  connection with exactly solvable spin models},'' {\em JETP Lett.} {\bf 59}
  (1994)  596--599, \href{http://arxiv.org/abs/hep-th/9311037}{{\tt
  arXiv:hep-th/9311037 [hep-th]}}. [Pisma Zh. Eksp. Teor. Fiz.59,571(1994)].

\bibitem{Faddeev-Korchemsky-non-compact}
L.~D. Faddeev and G.~P. Korchemsky, ``{High-energy QCD as a completely
  integrable model},''
  \href{http://dx.doi.org/10.1016/0370-2693(94)01363-H}{{\em Phys. Lett.} {\bf
  B342} (1995)  311--322}, \href{http://arxiv.org/abs/hep-th/9404173}{{\tt
  arXiv:hep-th/9404173 [hep-th]}}.

\bibitem{QCD-sl2-review}
V.~M. Braun, G.~P. Korchemsky, and D.~Müller, ``{The Uses of conformal
  symmetry in QCD},''
  \href{http://dx.doi.org/10.1016/S0146-6410(03)90004-4}{{\em Prog. Part. Nucl.
  Phys.} {\bf 51} (2003)  311--398},
  \href{http://arxiv.org/abs/hep-ph/0306057}{{\tt arXiv:hep-ph/0306057
  [hep-ph]}}.

\bibitem{Frassek-sl2}
R.~Frassek, C.~Giardinà, and J.~Kurchan, ``{Non-compact quantum spin chains as
  integrable stochastic particle processes},''
  \href{http://dx.doi.org/10.1007/s10955-019-02375-4}{{\em Journal of
  Statistical Physics} (2019)  }, \href{http://arxiv.org/abs/1904.01048}{{\tt
  arXiv:1904.01048 [math-ph]}}.

\bibitem{Bethe-XXX}
H.~Bethe, ``Zur Theorie der Metalle,'' {\em Zeitschrift {f\"ur} Physik} {\bf
  A71} (1931)  205.

\bibitem{XXZ1}
R.~Orbach, ``Linear Antiferromagnetic Chain with Anisotropic Coupling,''
  \href{http://dx.doi.org/10.1103/PhysRev.112.309}{{\em Physical Review} {\bf
  112} (1958) no.~2, 309--316}.

\bibitem{XXZ2}
L.~R. Walker, ``Antiferromagnetic Linear Chain,''
  \href{http://dx.doi.org/10.1103/PhysRev.116.1089}{{\em Physical Review} {\bf
  116} (1959) no.~5, 1089--1090}.

\bibitem{XXZ3}
C.~N. Yang and C.~P. Yang, ``One-Dimensional Chain of Anisotropic Spin-Spin
  Interactions. I. Proof of Bethe's Hypothesis for Ground State in a Finite
  System,'' \href{http://dx.doi.org/10.1103/PhysRev.150.321}{{\em Physical
  Review} {\bf 150} (1966) no.~1, 321--327}.

\bibitem{martins-melo-2}
M.~J. {Martins} and C.~S. {Melo}, ``{Algebraic Bethe ansatz for U(1) invariant
  integrable models: Compact and non-compact applications},''
  \href{http://dx.doi.org/10.1016/j.nuclphysb.2009.04.018}{{\em Nuclear Physics
  B} {\bf 820} (2009) no.~3, 620--648},
  \href{http://arxiv.org/abs/0902.3476}{{\tt arXiv:0902.3476 [math-ph]}}.

\bibitem{ragoucy-higher-spin}
N.~{Cramp{\'e}}, E.~{Ragoucy}, and L.~{Alonzi}, ``{Coordinate Bethe Ansatz for
  Spin s XXX Model},'' \href{http://dx.doi.org/10.3842/SIGMA.2011.006}{{\em
  SIGMA} {\bf 7} (2011)  6}, \href{http://arxiv.org/abs/1009.0408}{{\tt
  arXiv:1009.0408 [math-ph]}}.

\bibitem{faddeev-how-aba-works}
L.~D. {Faddeev}, ``{How Algebraic Bethe Ansatz works for integrable model},''
  {\em arXiv e-prints} (1996)  ,
  \href{http://arxiv.org/abs/hep-th/9605187}{{\tt arXiv:hep-th/9605187
  [hep-th]}}.

\bibitem{gombor-private}
T.~Gombor, ``private communication,''.

\bibitem{zarembo-neel-cite2}
O.~{Foda} and K.~{Zarembo}, ``{Overlaps of Partial Neel States and Bethe
  States},'' \href{http://dx.doi.org/10.1088/1742-5468/2016/02/023107}{{\em
  Journal of Statistical Mechanics: Theory and Experiment} {\bf 2016} (2015)
  no.~2, 023107}, \href{http://arxiv.org/abs/1512.02533}{{\tt arXiv:1512.02533
  [hep-th]}}.

\bibitem{kristjansen-proofs}
M.~{de Leeuw}, C.~{Kristjansen}, and G.~{Linardopoulos}, ``{Scalar one-point
  functions and matrix product states of AdS/dCFT},''
  \href{http://dx.doi.org/10.1016/j.physletb.2018.03.083}{{\em Physics Letters
  B} {\bf 781} (2018)  238--243}, \href{http://arxiv.org/abs/1802.01598}{{\tt
  arXiv:1802.01598 [hep-th]}}.

\bibitem{Caux-Neel-overlap2}
M.~{Brockmann}, J.~{De Nardis}, B.~{Wouters}, and J.-S. {Caux}, ``{N{\'e}el-XXZ
  state overlaps: odd particle numbers and Lieb-Liniger scaling limit},''
  \href{http://dx.doi.org/10.1088/1751-8113/47/34/345003}{{\em Journal of
  Physics A Mathematical General} {\bf 47} (2014)  345003},
  \href{http://arxiv.org/abs/1403.7469}{{\tt arXiv:1403.7469
  [cond-mat.stat-mech]}}.

\bibitem{non-int-ov}
M.~de~Leeuw, C.~Kristjansen, and K.~E. Vardinghus, ``{A non-integrable quench
  from AdS/dCFT},''
  \href{http://dx.doi.org/10.1016/j.physletb.2019.134940}{{\em Phys. Lett.}
  {\bf B798} (2019)  134940}, \href{http://arxiv.org/abs/1906.10714}{{\tt
  arXiv:1906.10714 [hep-th]}}.

\bibitem{Smirnov-Book}
F.~A. Smirnov, ``{Form-factors in completely integrable models of quantum field
  theory},''
{\em Adv. Ser. Math. Phys.} {\bf 14} (1992)  1--208.

\bibitem{sajat-nested}
B.~{Pozsgay}, W.-V. {van Gerven Oei}, and M.~{Kormos}, ``{On form factors in
  nested Bethe Ansatz systems},''
  \href{http://dx.doi.org/10.1088/1751-8113/45/46/465007}{{\em Journal of
  Physics A Mathematical General} {\bf 45} (2012)  465007},
  \href{http://arxiv.org/abs/1204.4037}{{\tt arXiv:1204.4037
  [cond-mat.stat-mech]}}.

\bibitem{yunfeng2}
Y.~Jiang and A.~Petrovskii, ``{Diagonal form factors and hexagon form
  factors},'' \href{http://dx.doi.org/10.1007/JHEP07(2016)120}{{\em JHEP} {\bf
  07} (2016)  120}, \href{http://arxiv.org/abs/1511.06199}{{\tt
  arXiv:1511.06199 [hep-th]}}.

\bibitem{yunfeng3}
Y.~Jiang, ``{Diagonal Form Factors and Hexagon Form Factors II. Non-BPS Light
  Operator},'' \href{http://dx.doi.org/10.1007/JHEP01(2017)021}{{\em JHEP} {\bf
  01} (2017)  021}, \href{http://arxiv.org/abs/1601.06926}{{\tt
  arXiv:1601.06926 [hep-th]}}.

\bibitem{XXZ-gaudin-norms}
M.~Gaudin, B.~M. McCoy, and T.~T. Wu, ``Normalization sum for the Bethe's
  hypothesis wave functions of the Heisenberg-Ising chain,''
  \href{http://dx.doi.org/10.1103/PhysRevD.23.417}{{\em Physical Review D} {\bf
  23} (1981)  417--419}.

\bibitem{caux-stb-LL-BEC-quench}
J.~De~Nardis, B.~Wouters, M.~Brockmann, and J.-S. Caux, ``Solution for an
  interaction quench in the Lieb-Liniger Bose gas,''
  \href{http://dx.doi.org/10.1103/PhysRevA.89.033601}{{\em Physical Review A}
  {\bf 89} (2014)  033601}, \href{http://arxiv.org/abs/1308.4310}{{\tt
  arXiv:1308.4310 [cond-mat.stat-mech]}}.

\bibitem{Brockmann-BEC}
M.~{Brockmann}, ``{Overlaps of q-raised N{\'e}el states with XXZ Bethe states
  and their relation to the Lieb-Liniger Bose gas},''
  \href{http://dx.doi.org/10.1088/1742-5468/2014/05/P05006}{{\em Journal of
  Statistical Mechanics: Theory and Experiment} {\bf 5} (2014)  05006},
  \href{http://arxiv.org/abs/1402.1471}{{\tt arXiv:1402.1471
  [cond-mat.stat-mech]}}.

\end{thebibliography}

\end{document}